\newtheorem{thm}{Theorem}
\newtheorem{lem}[thm]{Lemma}
\newtheorem{defn}{Definition}
\newtheorem{rmk}[thm]{Remark}
\newcommand{\be}{\begin{equation}}
\newcommand{\ee}{\end{equation}}
\DeclareMathOperator{\Tr}{Tr}
\renewcommand{\v}[1]{\ensuremath{\boldsymbol #1}}
\newsavebox{\smlmat}
\savebox{\smlmat}{$\left(\begin{smallmatrix}
1&3&4\\
2&0&0\\
5&0&0\\
\end{smallmatrix}\right)$}
\newsavebox{\smmat}
\savebox{\smmat}{$\left(\begin{smallmatrix}
0&0&2\\
0&7&0\\
5&0&1\\
\end{smallmatrix}\right)$}
\newsavebox{\mat}
\savebox{\mat}{$\left(\begin{smallmatrix}
4&0&0&0\\
5&3&6&7&\\
0&0&2&0\\
0&0&0&1\\
\end{smallmatrix}\right)$}
\newsavebox{\matP}
\savebox{\matP}{$\left(\begin{smallmatrix}
1&1&0&0\\
1&0&1&0&\\
1&0&0&1\\
1&0&0&0\\
\end{smallmatrix}\right)$}
\begin{document}
\title{Thermal Processes and State Achievability}

\author{Pawe\l{} Mazurek}
\affiliation{Institute of Theoretical Physics and Astrophysics, National Quantum Information Centre, Faculty of Mathematics, Physics and Informatics, University of Gda\'nsk, 80-308 Gda\'nsk, Poland}

\begin{abstract}
A complete characterization of the set of states that can be achieved through Thermal Processes (TP) is given by describing all vertices, edges and facets of the allowed set of states in the language of thermomajorization curves. TPs are linked to transportation matrices, which leads to the existance of extremal TPs that are not required in implemenation of any transition allowed by TPs, for every dimension $d\geq 4$ of the state space. A property of the associated graphs, biplanarity, which differentiates between these extremal TPs and the necessary ones, is identified.
\end{abstract}

\maketitle

\section{Introduction}

One of the approaches to a quantum description of microscale systems interacting with a macroscale environment is through the so-called Resource Theories, with Thermal Operations (TO) \cite{Janzing00, Streater95} being one of the most fruitful. Recently, it allowed for a full characterization of possible total energy-conserving transitions between states diagonal in the eigenbasis of a local Hamiltonian \cite{Janzing00, Ruch76, Horodecki13}, description of deterministic work extraction and work cost of a state formation \cite{Aberg2013,Horodecki13} and lead to the formulation of microscale II laws of thermodynamics in the form of a family of equations restricting evolution of R\'enyi entropies \cite{Brandao2015}. Yet, many questions still remain open. Most remarkably, the quantum thermodynamic description of systems with coherences within TO needs perfecting, as only bounds on the evolution of coherences exist for a general qudit case \cite{Cwiklinski2015}. 

More generally, the physical applicability of bounds delivered by TO has been questioned, as full control over the system is assumed within TO description. For diagonal initial states, transitions allowed by TO are described by Thermal Processes (TP), which have been shown to be implementable as a sequence of operations involving partial thermalizations (which require only weak coupling between the system and the bath \cite{Alicki07}) and manipulation of the system Hamiltonian, though at cost of using an ancilla \cite{Perry16}. As this sequence can be highly non-trivial, it is tempting to ask whether every transition allowed by a TP for a state in $d$ dimensional space can be performed as a convex combination of sequences of TPs acting on lower-dimensional spaces of the system (we will denote them by TP(n), with $n<d$, where $d$ being dimension of the Hilbert space of the system), as they not only do not imply the necessity of manipulating the energy gap of the Hamiltonian, but also are guaranteed to be of limited length, due to a convex structure of a set of states $\rho_{init}^{TP(d)}$ available through TPs from a given initial state $\rho_{init}$. 

Such studies have been carried out by authors of \cite{Lostaglio16b, Mazurek2017}, with a conclusion that a set of states achievable by TP(2), denoted by $\rho_{init}^{TP(2)}$, is strictly smaller from $\rho_{init}^{TP(d)}$ for $d>2$ and some $\rho_{init}$. Furthermore, there always exists a state $\rho_{init}$ for which the same relation holds between $\rho_{init}^{TP(d-1)}$ and $\rho_{init}^{TP(d)}$ for arbitrary $d\geq3$ \cite{Mazurek2017}. Even in the restricted scenario of TP(2) transformations, length of sequences of TP(2) required for all allowed transitions is not known for general $d$, only an upper bound exists \cite{Lostaglio16b}. Finally, one may ask if all transformations TP(d) are required to reach an arbitrary state within $\rho_{init}^{TP(d)}$, for all $\rho_{init}$. Trivially, by allowing mixing between operations, we can focus on extremal TPs. Then one can still ask: is the experimenter required to be able to perform every extremal TP in order to implement all the allowed transitions in the state space? A precise description of the structure of two convex sets: TP(d) of Thermal Processes and $\rho_{init}^{TP(d)}$ of states achievable by them, allows us to answer this question negatively.

\begin{figure}[h]
\centering
\includegraphics[width = 1\linewidth]{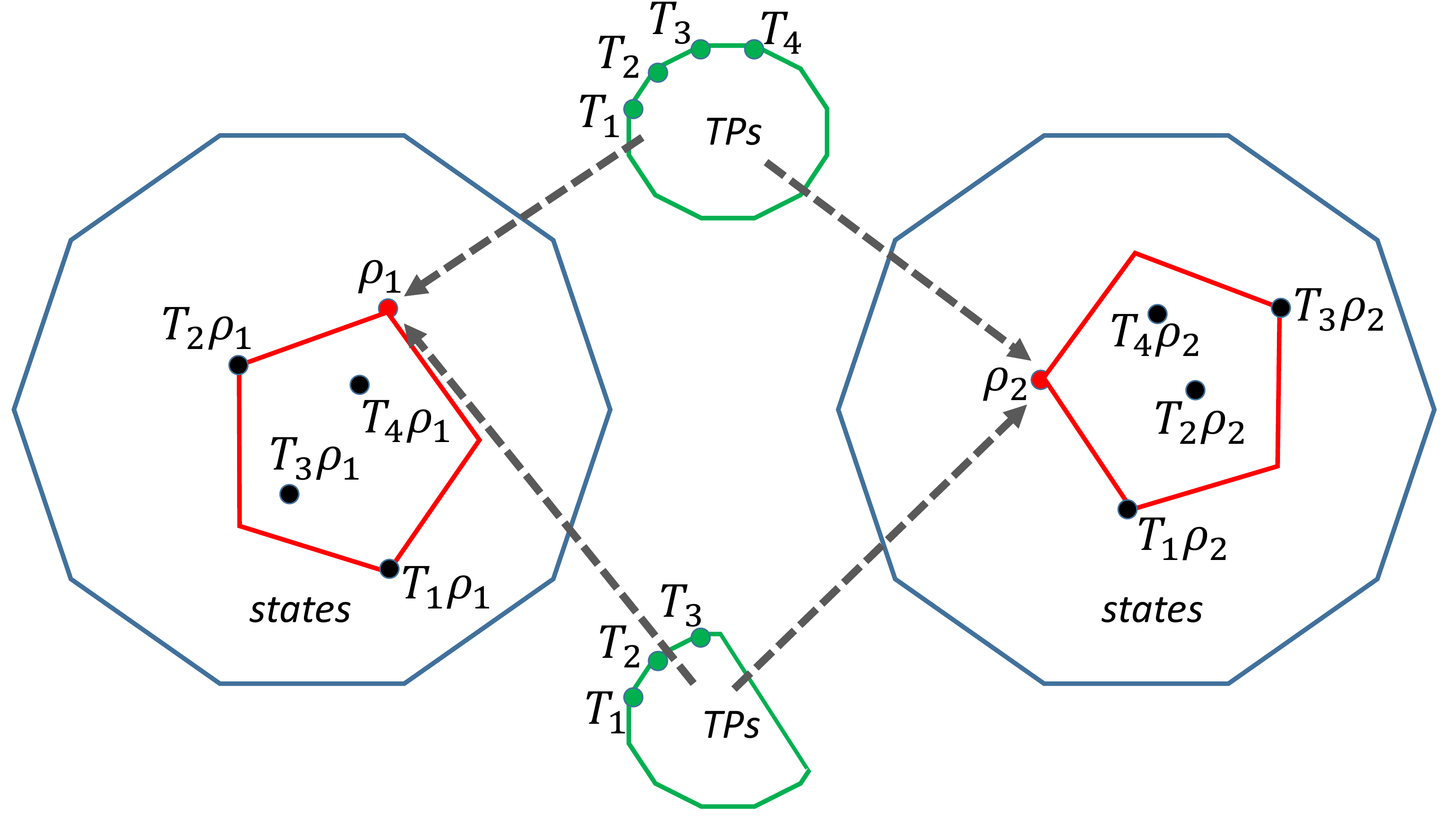}
\caption{\label{nonbi} 
Sets of diagonal states (blue polygons) and Thermal Processes (upper green polygon) for a system in a space of dimension $d$ (not directly manifested in the picture). For different initial states $\rho_{1}$ and $\rho_{2}$, different sets of states achievable by TPs can be obtained: $\rho_{1}^{TP}$ and $\rho_{2}^{TP}$, respectively (red polygons). Some extremal points of TPs, $T_{2}$ and $T_{3}$, map to an extremal point or to the interior of $\rho^{TP}$, depending on the initial state $\rho$. Non-biplanar extremal TPs, like $T_{4}$, always map into the interior of a set $\rho^{TP}$, no matter which initial state $\rho$ from the blue polygone is selected. When taking convex compositions of TPs is allowed, these extremal TPs can be discarded with no harm to attainable set of states (lower green polygon). Non-biplanar extremal TPs exist for all $d\geq 4$.    
}
\end{figure}

In this paper we provide a complete characterization of extremal points, edges and hyperfaces of $\rho_{init}^{TP(d)}$ in terms of the so called thermomajorization curves. Furthermore, based on the connection between TPs and transportation matrices \cite{Jurkat1967} we introduce the notion of biplanar transportation matrices and define biplanar extremal TPs. We show that every such TP is uniquely connected with a family of pairs of states, and therefore it is possible to determine every biplanar extremal TP(d) with a thermomajorization diagram. Moreover, we show that, for a generic $\rho_{init}$, all extremal TPs that do not belong to the class of biplanar extremal TPs map into the interior of $\rho_{init}^{TP(d)}$, therefore they lack a clear physical significance. We show a construction of such a process, proving the existence of such processes for arbitrary finite, non-zero temperature, for all $d\geq 4$.

We will proceed to the main part of the paper after introducing definitions and basic properties of Thermal Processes and transportation matrices.

\section{Preliminaria}

\subsection{Thermal Processes}

We start with a brief description of Resource Theory of Thermal Operations. A $d$-dimensional system in a state $\rho\in\mathcal{B}(\mathcal{H_{S}})$, $\mathcal{H_{S}}\cong\mathbb{C}^d$, is associated with a Hamiltonian $H_{S}=\sum_{i=0}^{d-1}E_{i}|E_{i}\rangle\langle E_{i}|$. For a bath $\mathcal{B}(\mathcal{H_{B}})$ with a Hamiltonian $H_{B}$ we define a Gibbs state $\mathcal{B}(\mathcal{H_{B}})\ni\rho_{\beta}^{B}=\exp^{-\beta H_{B}}/\Tr[\exp^{-\beta H_{B}}]$, where $\beta=\frac{1}{kT}$, and $k$ is Boltzmann constant, while $T$ is temperature. 

For a given $\beta$, we define a set of Thermal Operations (TO) as all maps $\mathcal{E}:\mathcal{B}(\mathcal{H_{S}})\rightarrow\mathcal{B}(\mathcal{H_{S}})$ that can be constructed by the following operations:   
\begin{itemize}
\item one can perform an arbitrary unitary $U$ on $\mathcal{B}(\mathcal{H_{S}}\otimes\mathcal{B}(\mathcal{H_{B}}))$ that conserves the total energy: $[U,H_{S}+H_{B}]=0$.
\item one can extend the system by adding an arbitrary ancilla $\mathcal{B}(\mathcal{H_{A}})$, $\mathcal{H_{A}}\cong\mathbb{C}^{d'}$, with a Hamiltonian $H_{A}$, in a Gibbs state $\rho_{A}=\exp^{-\beta H_{A}}/\Tr[\exp^{-\beta H_{A}}]$.
\item one can remove an arbitrary ancilla $\mathcal{B}(\mathcal{H_{A'}})$, $\mathcal{H_{A'}}\cong\mathbb{C}^{d''}$, with a Hamiltonian $H_{A'}$, in a Gibbs state $\rho_{A'}=\exp^{-\beta H_{A'}}/\Tr[\exp^{-\beta H_{A'}}]$.
\end{itemize} 

This leads to a set of trace preserving, completely-positive maps on a system $\rho\rightarrow\mathcal{E}(\rho)=\Tr_{B}\Big[U [\rho\otimes \rho_{\beta}^{B}] U^{\dagger}\Big]$.

Under an assumption that $H_{S}$ has a non-degenerated spectrum, a TO acting on a state that is diagonal in the basis of its Hamiltonian ($[\rho_{S},H_{S}]=0$) cannot lead to creation of coherences in this basis. This follows from the fact that evolution of non-diagonal elements of the density matrix in independent from the diagonal \cite{Lostaglio15}. Moreover, it is easy to see that TO conserve Gibbs state of the system, $\rho_{\beta}^{S}$. Therefore, action $\mathcal{E}(\rho)$ of a TO $\mathcal{E}$ on a state $\rho$ diagonal in $H_{S}$, represented by a vector $\v{p}$ comprising of its eigenvalues, $\rho=diag[\v{p}]$, can be associated with an action $T\v{p}$ of a left-stochastic matrix $T$ preserving a Gibbs vector. We call this matrix a Thermal Process (TP):

\begin{defn}
A set of Thermal Process $TP(d)_{\beta,H_{S}}$ is a set of $d\times d$ matrices $T$ satisfying $\v{1}^{T}T=\v{1}^{T}$ and $T\v{g}=\v{g}$, where $\v{1}^{T}=[1,\dots,1]$ and $\v{g}$: $g_{i}=q_{i,0}/\sum\limits_{j} q_{j,0}$ are vectors of lenght $d$. Here, $q_{m,n}=e^{-\beta(E_{m}-E_{n})}$, with $E_{i}$ being eigenvalues of $H_{S}$.  
\end{defn}

In what follows, we will be skipping indexes $d,\beta,H_{S}$ when it does not lead to confusion, assume that $E_{0}=0$, and assume all states $\rho$ satisfy $[\rho,H_{S}]$ for a non-degenerated Hamiltonian $H_{S}$, and therefore represent them by vectors $\v{p}$ carrying information about their occupations on energy levels of $H_{S}$. 

On the other hand, for every TP there exists a TO on bath and a system diagonal with respect to $H_{S}$, that performs this TP on a system. \cite{Horodecki13}. Therefore,  all transformations allowed for diagonal states within Thermal Operations Resource Theory can be equivalently characterized by TPs. 

\begin{rmk}
Denote by $\rho_{init}^{TP}$ a set of states that can be obtained through Thermal Processes $TP$  from a state $\rho_{init}$. This set is a convex polytope. 
\end{rmk}

\begin{proof}
From the definition of a Thermal Process we see that $TP$ is a convex polytope. Now, take $\rho_{1}, \rho_{2}\in \rho_{init}^{TP}:$ $T_{1}\rho_{init}=\rho_{1}$, $T_{2}\rho_{init}=\rho_{2}$, where $T_{1}, T_{2}\in TP$. 
For every $0\leq\alpha\leq 1$, a state $\alpha \rho_{1}+(1-\alpha)\rho_{2}$ belongs to $\rho_{init}^{TP}$ due to convexity of $TP$: $\alpha \rho_{1}+(1-\alpha)\rho_{2}=\alpha T_{1}\rho_{init}+(1-\alpha)T_{2}\rho_{init}=T_{3}\rho_{init}$, where $T_{3}=(\alpha T_{1}+(1-\alpha)T_{2})\in TP$. Therefore, $\rho_{init}^{TP}$ is convex. Each of $\rho_{out}\in\rho_{init}^{TP}$ can be represented as $\rho_{out}=\sum \alpha_{i}T^{ext}_{i}\rho_{init}$, where $0\leq\alpha_{i}\leq1$ and $\sum_{i}\alpha_{i}=1$ and $\{T^{ext}_{i}\}_{i}$ is a set of extremal points of $TP$. If $\rho_{out}$ is extremal, then $T^{ext}_{k}\rho_{init}=T^{ext}_{l}\rho_{init}$ for all pairs of $T^{ext}_{i}$ contributing with a non-zero coefficient to the decomposition of $\rho_{out}$. Therefore, number of extremal points of $\rho_{init}^{TP}$ cannot be bigger than number of extremal points of $TP$, and as the latter set is a convex polytope, the former one is a polytope as well.
\end{proof}

The set of states achievable by TPs from a given initial state $\rho_{init}$ is fully characterized by a criterion exploiting representation of a vector $\v{p}$ on a the so-called thermomajorization diagram (see Fig. \ref{termow}).

\begin{defn}[Thermo-majorization curve]

Define a vector $\v{s}=(q_{00},q_{10},q_{20},\dots,q_{d-1,0})$. For every state $\rho_{S}$ commuting with $H_{S}$, let a vector $\v{p}$ represents occupations $p_{i}$ of energy levels $E_{i}$, $i=0,1,\dots,d-1$. Choose a permutation $\pi$ on \v{p} and \v{s}, such that it leads to a non-increasing order of elements in a vector $\v{d}$, $d_{k}=\big(\frac{\sum_{i=0}^{k}(\pi \v{p})_i}{ \sum_{i=0}^{k} (\pi \v{s})_i }\big)$, $k=0,\dots,d-1$. A set of points $\{\sum_{i=0}^{k}(\pi\v{p})_i,\sum_{i=0}^{k}(\pi\v{s})_i\}_{k=0}^{d-1}\cup\{0,0\}$, connected by straight lines, defines a curve associated with the state $\rho$. We denote it by $\beta(\rho)$ and call a thermomajorization curve of state $\rho$ represented by $\v{p}$. 
\end{defn}

\begin{figure}[h]
\centering
\includegraphics[width = 1\linewidth]{./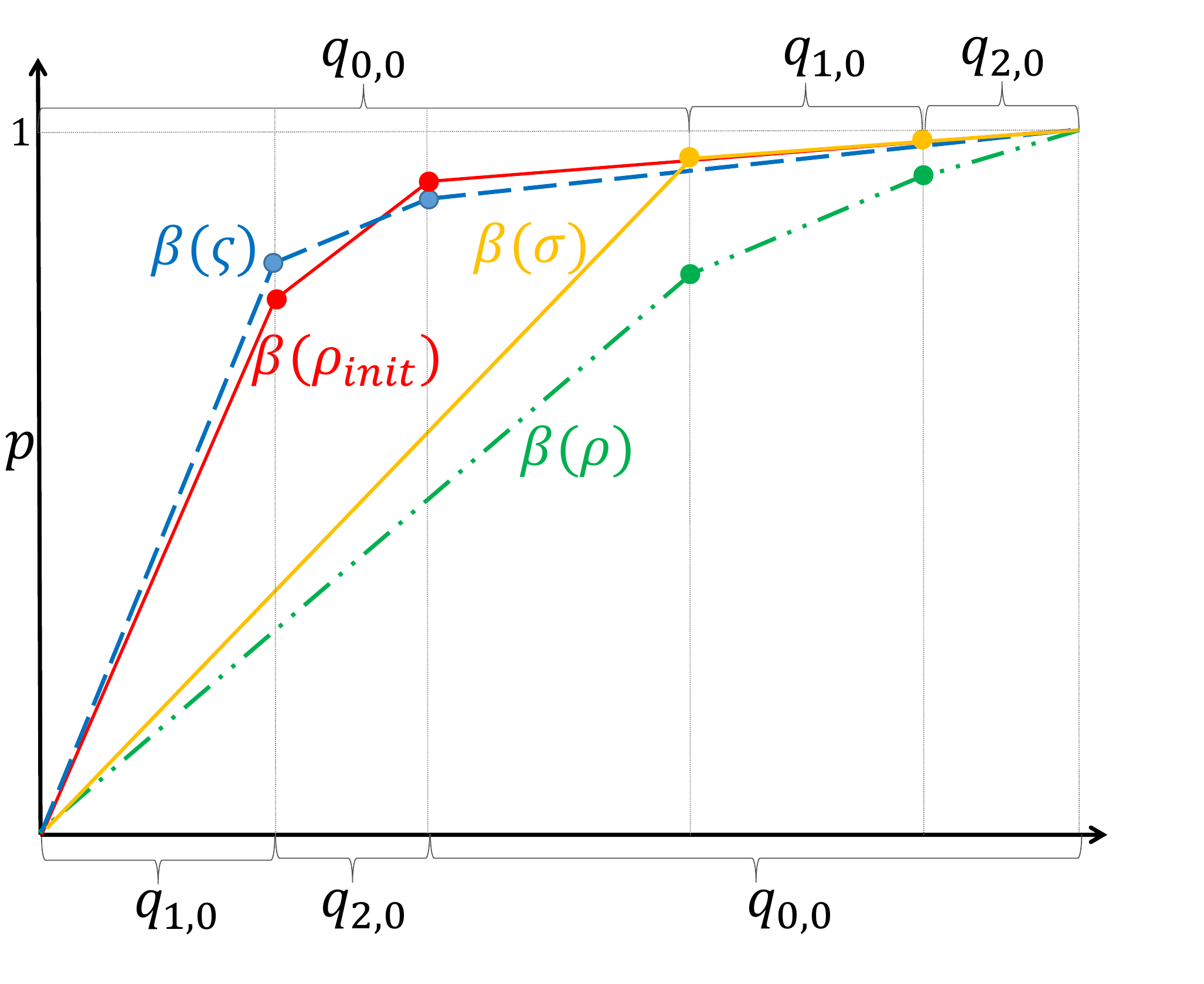}
\caption{\label{termow} 
Thermomajorization diagram for a d=3 system, and certain $H_{S}$ and $\beta$ defining $q_{00}=1$, $q_{10}$ and $q_{20}$. According to Lemma \ref{lem5}, $\rho,\sigma\in\rho_{init}^{TO(3)_{\beta,H_{S}}}$, but $\zeta\notin \rho_{init}^{TO(3)_{\beta,H_{S}}}$ and $\rho_{init}\notin \zeta^{TO(3)_{\beta,H_{S}}}$. Elbows of curves are indicated by circles. Curves $\beta(\sigma)$ and $\beta(\rho)$ are \textit{thermomajorized} by curve $\beta(\rho_{init})$. Curve $\beta(\sigma)$ is \textit{tightly thermomajorized} by curve $\beta(\rho_{init})$. States $\rho_{init}$ and $\zeta$ have $\beta$-order $(2,3,1)$, while states $\sigma$, $\rho$ have $\beta$-order $(1,2,3)$.}
\end{figure}

Points $\{\sum_{i=0}^{k}(\pi\v{p})_i,\sum_{i=0}^{k}(\pi\v{s})_i\}_{k=0}^{d-1}$ will be called elbows of a curve $\beta(\v{p})$. 
The curve is concave due to a non-increasing order of elements in $\v{d}$. Let us note that there might be more than one permutation leading to a creation of a concave curve $\beta(\rho)$. The vector $\pi(1,\dots,d)^{T}$ will be called a $\beta$-order of $\rho$. It shows modification of the order of segments that had to be done in order to assure convexity of $\beta(\rho)$.

\begin{lem} [\cite{Horodecki13}]
A transition from $\rho_{init}\in\mathcal{B}(\mathcal{H_{S}})$, $\mathcal{H_{S}}\cong\mathbb{C}^d$ to $\rho_{out}\in\mathcal{B}(\mathcal{H_{S}})$ under $TPs$ is possible if and only if $\beta(\rho_{init})$ thermomajorizes $\beta(\rho_{out})$, i.e. all elbows of $\beta(\rho_{out})$ lie on $\beta(\rho_{init})$ or below it.
\label{lem5}
\end{lem}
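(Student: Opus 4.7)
The plan is to prove both implications by reducing thermomajorization to ordinary majorization via Gibbs rescaling, then invoking the Hardy--Littlewood--P\'olya--style characterization of transitions by doubly-stochastic matrices. The bridge is that if $T$ is a TP preserving $\v{g}$, the conjugated matrix $\tilde{T}_{ij}=T_{ij}\,g_j/g_i$ is column-stochastic and the rescaled vectors $\tilde{p}_i=p_i/g_i$ transform as $\tilde{\v{p}}_{out}=\tilde{T}\tilde{\v{p}}_{init}$; equivalently, the value of $\beta(\v{p})$ at an abscissa $x$ is the concave envelope over subsets $S\subseteq\{0,\dots,d-1\}$ of the points $\bigl(\sum_{i\in S}s_i,\sum_{i\in S}p_i\bigr)$, with linear interpolation between elbows.

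For the only-if direction, suppose $\v{p}_{out}=T\v{p}_{init}$ and fix an elbow of $\beta(\v{p}_{out})$ at abscissa $x_{S}=\sum_{i\in S_{out}}s_i$. I would write
\[
\sum_{i\in S_{out}}p_{out,i}=\sum_{j}\Big(\sum_{i\in S_{out}}T_{ij}\Big)p_{init,j}.
\]
The coefficients $c_{j}=\sum_{i\in S_{out}}T_{ij}$ lie in $[0,1]$ by column-stochasticity of $T$ and satisfy $\sum_{j}c_{j}g_{j}=\sum_{i\in S_{out}}g_{i}$ by Gibbs-preservation. Thus the $c_{j}$ select a fractional mass of total Gibbs-weight $x_{S}$; a rearrangement argument (ordering the $p_{init,j}/s_{j}$ in decreasing order and pushing weight towards the larger ratios) then bounds the left-hand side by $\beta(\v{p}_{init})(x_{S})$. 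Since the abscissa was arbitrary, every elbow of $\beta(\v{p}_{out})$ sits on or below $\beta(\v{p}_{init})$.

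For the if direction, I would realise $T$ as a composition of elementary TPs acting nontrivially on only two energy levels (``thermal $T$-transforms''), in analogy with Muirhead's construction for classical majorization. After applying the permutations implementing the $\beta$-orders of $\v{p}_{init}$ and $\v{p}_{out}$, the elbows of the two curves are brought into a common abscissa grid and matched from left to right: at each step a two-level partial thermalization is chosen that lowers the current curve to the target value at one elbow without raising it elsewhere. Iterating until the two curves coincide produces a valid TP taking $\v{p}_{init}$ to $\v{p}_{out}$.

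The main obstacle is handling the mismatch between the $\beta$-orders of $\v{p}_{init}$ and $\v{p}_{out}$: the permutation $\pi$ implicit in the definition of $\beta(\v{p})$ depends on the state, and outright permutations of energy levels are not themselves TPs since they fail to preserve $\v{g}$. One therefore has to realise any needed reordering through $T$-transforms that actually cross an ordering boundary, and verify that the intermediate curves still lie weakly below $\beta(\v{p}_{init})$ throughout the sweep. Controlling this crossing combinatorially for general $d$ is the delicate step that prevents the argument from being a direct transcription of the classical Hardy--Littlewood--P\'olya proof and is the place where the geometry of the thermomajorization diagram is genuinely needed.
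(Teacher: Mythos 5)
The paper offers no proof of this lemma at all: it is imported verbatim from \cite{Horodecki13}, so there is no internal argument to compare against and your proposal must stand on its own. The only-if half does: the coefficients $c_j=\sum_{i\in S_{out}}T_{ij}$ indeed lie in $[0,1]$ by column-stochasticity, the constraint $\sum_j c_j g_j=\sum_{i\in S_{out}}g_i$ follows from $T\v{g}=\v{g}$, and the greedy (fractional-knapsack) rearrangement correctly identifies the maximum of $\sum_j c_j p_{init,j}$ under these constraints with the value of the concave curve $\beta(\rho_{init})$ at the abscissa $\sum_{i\in S_{out}}s_i$. That half is essentially the standard argument.

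The if direction, however, is built on an approach that is provably insufficient, not merely delicate. Decomposing a general Gibbs-stochastic transition into a sequence of two-level thermal $T$-transforms is exactly the programme whose failure motivates part of this paper: as recalled in the introduction (citing \cite{Lostaglio16b, Mazurek2017}), the set of states reachable by sequences, and even convex combinations of sequences, of two-level Thermal Processes is strictly smaller than $\rho_{init}^{TP(d)}$ already for $d=3$ and suitable $\rho_{init}$. Hence there exist pairs with $\beta(\rho_{init})$ thermomajorizing $\beta(\rho_{out})$ for which no such left-to-right sweep of two-level partial thermalizations exists; the thermal analogue of Muirhead's construction breaks because each two-level move must preserve the Gibbs weight restricted to those two levels, a constraint absent in the doubly-stochastic (infinite-temperature) case. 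A correct route is either the embedding argument of \cite{Horodecki13} (approximate the $g_i$ by rationals $d_i/D$, split level $i$ into $d_i$ degenerate sublevels so that thermomajorization becomes ordinary majorization of the embedded distributions, and invoke Hardy--Littlewood--P\'olya/Birkhoff there, followed by a continuity argument), or an explicit $d\times d$ construction in the spirit of this paper: Lemma \ref{14} builds a single Gibbs-stochastic matrix realizing any tightly thermomajorized target directly from the diagram, and the decomposition in Theorem \ref{Extremal} expresses an arbitrary thermomajorized target as a convex combination of tightly thermomajorized ones, so a convex combination of those explicit matrices furnishes the required $T$.
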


For the sake of characterization of a set of extremal points of $\rho_{init}^{TP}$ (Sec. \ref{sec:states}), we single out a specific relation between two curves:

\begin{defn}[Tight thermomajorization]
If a curve $\beta(\rho)$ has all elbows on a curve $\beta(\sigma)$,  $\beta(\sigma)$ tightly thermo-majorizes $\beta(\rho)$.
\end{defn}

\subsection{Transportation matrices}
Below we introduce a notion of transportation matrices, which properties will be useful in characterizing the connection between sets of states achievable by thermal processes, $\rho_{init}^{TP}$, and a set $TP$ alone.

\begin{defn}[Transportation matrix]
A transportation matrix $M$ is a $m\times n$ matrix with non-negative entries determined by two vectors $\v{c}$ and $\v{r}$ of lengths $m$ and $n$, respectively, in a way that all entries from the $i$-th row(column) of M sum to $r_{i}$ ($c_{i}$), and $\sum_{i}c_{i}=\sum_{j}r_{j}=C$.  
\end{defn}

For all pairs of non-negative vectors $\v{c}$ and $\v{r}$ satisfying the summation condition, an associated transportation matrix always exists: if $C=0$, it is a matrix with all entries equal $0$; for other cases we can construct $M$ as $M_{i,j}=r_{i}c_{j}/C$.  

For every pair of vectors the set of transportation matrices is a convex polytope, with convexity stemming from linearity of matrix addition, and limited number of extremal points coming from the requirement that for $M$ to be extremal it cannot have more than $m+n-1$ non-zero elements -- there is a a limited number of selection of locations for these elements within a matrix, and for every valid choice, values of these elements are uniquely determined. 

A set of extremal points of a transportation polytope is fully characterized by the following constructive algorithm [\cite{Jurkat1967}, Theorem 4.1]:

\begin{thm} [Extremal Points of a Transportration Polytope]\label{Th3}
A transportation matrix with defining vectors $\v{c}$, $\v{r}$, $\sum c_{i}=\sum r_{i}=C$ is extremal if and only if it can be constructed by repeating the following step, starting with a matrix with no values assigned:
\begin{itemize}
\item Pick a position $(i,j)$ in the matrix that has no assigned value, and fill it with $\min(r_{i},c_{j})$. If $r_{i}\leq c_{j}$ ($r_{i}\geq c_{j}$), fill all remaining entrances within an $i$-th row ($j$-th column) of the matrix with $0.$ ( This implies that if $r_{i}=c_{j}$, all the remaining entrances within $i$-th row and $j$-th column will be filled with 0). Update the values $r_{i}\rightarrow r_{i}-\min(r_{i},c_{j})$, $c_{j}\rightarrow c_{j}-\min(r_{i},c_{j})$. Updated vectors are non-negative and satisfy a summation criterion, therefore they define a transportation matrix. 
\end{itemize}
\end{thm}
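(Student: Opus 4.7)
The plan is to prove the two implications separately, using the standard characterization of extremal transportation matrices via the combinatorial structure of their support.

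First I would establish the key lemma: a transportation matrix $M$ is extremal if and only if its bipartite support graph $G(M)$ on row-indices $\{1,\dots,m\}$ and column-indices $\{1,\dots,n\}$, with an edge $(i,j)$ whenever $M_{ij}>0$, is a forest (equivalently, acyclic). The forward direction is the usual cycle-perturbation argument: given a cycle $(i_1,j_1),(i_2,j_1),(i_2,j_2),\dots$ in the support, one can alternately add and subtract a small $\varepsilon>0$ along the cycle to produce two distinct matrices $M\pm\varepsilon N$ in the polytope whose average is $M$, contradicting extremality. The converse: if $M = \tfrac12 (M_1+M_2)$ with $M_1\neq M_2$ both in the polytope, then the difference $N=M_1-M_2$ is a nonzero signed matrix with zero row and column sums supported on $G(M)$, and any minimal such $N$ must trace a cycle in $G(M)$.

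Next I would prove that every matrix produced by the algorithm is extremal. Each step picks a position $(i,j)$, assigns $\min(r_i,c_j)$ there, and then removes an entire row or column from further consideration. Since each assigned nonzero entry is followed by the permanent deletion of one of its two incident indices, the algorithm can never create a cycle: a cycle would need two distinct previously-assigned positions in some row and column that have not been eliminated, contradicting the deletion rule. Also note the algorithm terminates after at most $m+n-1$ assignments of nonzero values (the last assignment necessarily eliminates both the last remaining row and column). Thus $G(M)$ is a forest and $M$ is extremal by the lemma.

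For the converse, I would argue by induction on $m+n$. Given an extremal $M$, its support graph $G(M)$ is a forest and therefore has a leaf vertex; this leaf is either a row $i$ with a unique nonzero entry $M_{ij}$, in which case $M_{ij}=r_i\le c_j$ (so $\min(r_i,c_j)=r_i$), or a column $j$ with a unique nonzero entry $M_{ij}$, in which case $M_{ij}=c_j\le r_i$. In either case, the first step of the algorithm can be taken at $(i,j)$ and reproduces this entry exactly while zeroing the rest of the eliminated row or column. The reduced matrix $M'$ (obtained by deleting the eliminated row/column and updating the surviving marginal) is then a transportation matrix for the updated vectors, and its support graph is $G(M)$ minus the leaf, hence still a forest, so $M'$ is extremal in the smaller polytope. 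By induction, $M'$ arises from the algorithm, so $M$ does too.

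The main obstacle is the support-forest lemma, specifically the direction showing that extremality forces acyclicity, since it requires carefully constructing the cyclic perturbation and checking it lies in the polytope; the algorithmic steps of both directions are then essentially bookkeeping once that equivalence is in hand.
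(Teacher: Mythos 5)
The paper does not prove this theorem at all: it is imported verbatim from Jurkat--Ryser (cited as Theorem 4.1 of \cite{Jurkat1967}), so there is no in-paper argument to compare against. Your proof is the standard one for this result and is essentially sound: the equivalence ``extremal $\iff$ support graph is a forest'' via the alternating-cycle perturbation in one direction and the zero-marginal difference matrix $N=M_1-M_2$ (whose support has minimum degree $\ge 2$ on non-isolated vertices, hence contains a cycle) in the other, followed by the two reductions to the algorithm. Notably, the forest characterization you take as your key lemma is exactly the structural fact the paper itself invokes informally (citing \cite{Klee68}) right after stating the theorem, so your route is the natural one.

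Two points are worth tightening. First, your claim that the algorithm cannot create a cycle is stated too loosely; the clean version is a counting argument: each positively-assigned entry permanently eliminates exactly one of its two incident vertices, distinct edges eliminate distinct vertices, so the $2k$ edges of a putative cycle would eliminate all $2k$ of its vertices --- but then the \emph{first} cycle edge assigned eliminates a vertex that still has a second cycle edge waiting to be assigned, and that entry has already been zeroed, a contradiction. Second, your induction for the converse finds a leaf of the forest, but a forest arising from a transportation matrix may have isolated vertices when some $r_i=0$ or $c_j=0$; you should dispose of zero marginals first (the algorithm handles them by assigning $\min(0,c_j)=0$ and eliminating the row), and also note that in the leaf-row case $r_i\le c_j$ holds automatically because $c_j$ dominates the single entry $M_{ij}=r_i$. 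Neither issue is a real gap, just missing bookkeeping.
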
 

The procedure ends with $\v{r}=\v{0}$ and $\v{c}=\v{0}$, when we assign $0$ to all entries without any values assigned at that moment. All entries of the matrix have been determined, with at most $n+m-1$ positive ones. 

We define an important class of extremal transportation matrices in the following way:

\begin{defn} [Biplanar Extremal Transportration Matrix]\label{Def4}
An extremal transportation matrix with defining vectors $\v{c}$, $\v{r}$, $\sum c_{i}=\sum r_{i}=C$ is biplanar if it can be constructed by the following procedure:
\begin{itemize}
\item Pick a position $(i,j)$ in the matrix and fill it with $\min(r_{i},c_{j})$. If $r_{i}\leq c_{j}$ ($r_{i}\geq c_{j}$), fill all remaining entrances within an $i$-th row ($j$-th column) of the matrix with $0.$ ( This implies that if $r_{i}=c_{j}$, all the remaining entrances within $i$-th row and $j$-th column will be filled with 0). Update the values $r_{i}\rightarrow r_{i}-\min(r_{i},c_{j})$, $c_{j}\rightarrow c_{j}-\min(r_{i},c_{j})$. Updated vectors are non-negative and satisfy a summation criterion, therefore they define a transportation matrix. 
\item If, for indexes $i$ and $j$ from the last section, $r_{i}\neq 0$ ($c_{j}\neq 0$), choose a position without assigned value from the $i$-th row ($j$-th column) as the starting position from step 1, and apply it. If both $r_{i}=c_{j}=0$, then pick another position in the matrix that has no assigned value, $(i',j')$, and return to step 1 by substituting $i'\rightarrow i$, $j'\rightarrow j$. 
\item If $\v{r}=\v{0}$ and $\v{c}=\v{0}$, abort. All entries of the matrix have been determined, with at most n+m-1 positive ones.
\end{itemize}
\end{defn} 

The name of the class comes from the property of the graphs associated with adjacency matrices of these matrices. To every extremal matrix we can assign its adjacency matrix. Adjacency matrix is a matrix of the same size as the transportation matrix, with $0$ entries, except where its corresponding transportation matrices has a positive entry, them the adjacency matrix entry is $1$. If we associate rows of an adjacency matrix with graph vertices on the right side of the bipartite graph, and columns of an adjacency matrix with graph vertices on the left side of the bipartite graph, and connect vertices by edges whenever the corresponding entry of the adjacency matrix is 1, then all vertices of these graph are assigned to disjoint subgraphs in a way that, within a selected subgraph, every vertex is connected to any other vertex by a single path on edges (see Fig. \ref{forest}).

\begin{figure}[h]
\centering
\includegraphics[width = 1\linewidth]{./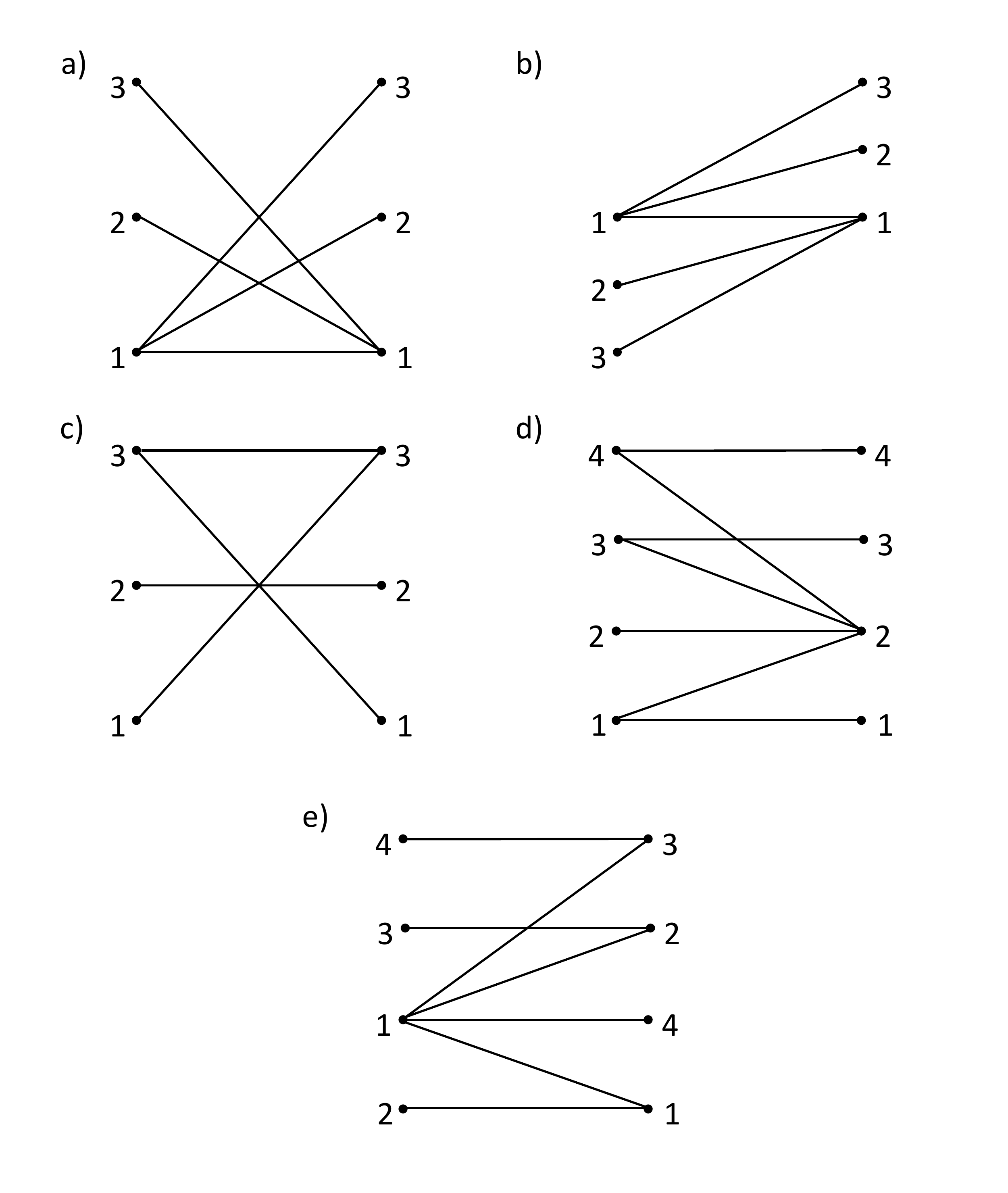}
\caption{\label{forest} 
Forests with no isolated vertices on bipartite graphs for chosen extremal transportation matrices: a)~\usebox{\smlmat}, with $\v{r}=(8,2,5)$, $\v{c}=(8,3,4)$; b) the same as in a); c) ~\usebox{\smmat}, with $\v{r}=(2,7,6)$, $\v{c}=(5,7,3)$; d) ~\usebox{\mat}, with $\v{r}=(4,21,2,1)$, $\v{c}=(9,3,8,8)$; (e) ~\usebox{\matP}, with $\v{r}=(2,2,2,1)$, $\v{c}=(4,1,1,1)$.  Notice that a graph in b) is a plain version of a biplanar graph in a). Graphs in d) and e) cannot be driven into plain forms by isomorphisms that do not switch vertices between sides of the bipartite structure. Graph in c) can be driven into such form, and, as opposed to the rest of the graphs, is a forest composed of 2 trees, instead of one tree. Labels of the vertices on the left(right) side of a graph correspond to columns (rows) of the respective extremal transportation matrix.}
\end{figure}

In the graph language, it means that extremal points of transportation matrices can be associated with forests (sets of trees) on bipartite graphs with no isolated vertices \cite{Klee68}. This property is also visible directly from the the construction of Theorem \ref{Th3}, which enforces that every vertex of the bipartite graph is included in some subgraph if we take vectors $\v{r}$ and $\v{c}$ to be positive, and also sets this subgraphs to be trees (otherwise, the procedure would have to allow for the subgraphs to have cycles, which is forbidden by the fact that at every step we put $0$ elements at non-determined entries of a matrix along some column or a row, so they cannot be assigned a positive value in further steps, and cycles cannot be formed).  

Now, biplanar extremal matrices are characterized by a property that for their associated graphs one can perform an isometric transformation on the vertices that preserves a bipartite structure (i.e. one can change positions of the vertices withing each side of the bipatite graph, without changing nearest neighbors within the graph), such that none of the edges cross (the graph becomes plain). This property justifies the name -- while existance of an isometric form of a graph with no edges crossing is a defining sign of a graph being planar, here we restrict isomorphic maps to those which preserve a bipartite structure. This property characterizes matrices from the biplanar class, because picking an initial element in step 1 of Definition \ref {Def4} fixes initial column and row numbers, as well as defines an edge connecting corresponding vertices on an associated bipartite graph. If in step 2 we decide to continue the procedure on the same column, we would add an edge to a new vertex on the right side of the graph. In contrary, if we we were to define a new element in the same row as the previous one, it is equivalent to adding edges starting from a selected vertex on a right side of the graph. Selection of a new, independent element starts a construction of a new tree. 

What will be crucial in our applications, it implies that there exists an order of vertices (determined in the direction from the bottom to the top) of the bipartite graph such that the graph is plain. On the other hand, if such an order exists, then the extremal transportation matrix can be created by the procedure from Definition \ref{Def4}, thus the transportation matrix belongs to the class of biplanar extremal matrices. Note that in general it does not have to be the case, i.e. there exist bipartite graphs with forests of edges and no isolated vertices, such that they cannot be driven to a plain form by isomorphic transformations conserving a bipartite structure; yet these graphs can be associated with extremal transportation matrices (see Fig.\ref{forest}d)).

Note that a transportation matrix does not mix between elements belonging to different trees (Fig.\ref{forest}c)). For rectangular transportation matrices of size $n\times n$, maximal number of trees is achieved by a diagonal extremal transportation matrix (if it exists for given $\v{c}$ and $\v{r}$). Such a matrix has also the smallest possible number of positive elements (n).

\section{Geometry of the set of states achievable by TPs}\label{sec:states}
Let us denote by $Extr[A]$ a set of extremal points of a convex set $A$. The structure of the set of states achievable through TPs from a given initial state $\rho_{init}$ is described by the following Thereom:

\begin{thm}[Extremal points of states achievable from $\rho_{init}$ by Thermal Processes]
\label{Extremal}
A state $\rho$ belongs to $Extr[\rho_{init}^{TP}]$ if and only if $\beta(\rho)$ is tightly thermomajorized by $\beta(\rho_{init})$.
\end{thm}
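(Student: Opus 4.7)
The plan is to use the equivalence provided by Lemma~\ref{lem5} between $\sigma\in\rho_{init}^{TP}$ and the linear inequalities $\sum_{j\in A}\sigma_j\le\beta(\rho_{init})\bigl(\sum_{j\in A}s_j\bigr)$ for every subset $A\subseteq\{0,\dots,d-1\}$. Fix a $\beta$-order $\pi$ of $\rho$ and consider the prefix subsets $A_k:=\{\pi(0),\dots,\pi(k)\}$, $k=0,\dots,d-1$, which produce $\sum_{j\in A_k}p_j=y_k$ at $\sum_{j\in A_k}s_j=x_k$, with $(x_k,y_k)$ the $k$-th elbow of $\beta(\rho)$. In this language, tight thermomajorization is exactly the statement that all $d$ of these distinguished inequalities are active at $\rho$.

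For the ``if'' direction I will assume $y_k=\beta(\rho_{init})(x_k)$ for every $k$ and take any decomposition $\rho=\alpha\sigma_1+(1-\alpha)\sigma_2$ with $\sigma_i\in\rho_{init}^{TP}$, $0<\alpha<1$. Evaluating the $A_k$-inequality on each $\sigma_i$ gives $\sum_{j\in A_k}(\sigma_i)_j\le y_k$, while the $\alpha$-weighted sum of these two quantities equals $y_k$ exactly; both inequalities must therefore be equalities. Subtracting consecutive equalities across $k=0,\dots,d-1$ yields $(\sigma_i)_{\pi(k)}=p_{\pi(k)}$ for every $k$, so $\sigma_1=\sigma_2=\rho$ and $\rho$ is extremal.

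For the ``only if'' direction I will argue the contrapositive through a perturbation at an offending elbow. If $y_k<\beta(\rho_{init})(x_k)$, I define $\rho^{\pm}$ by $(\rho^{\pm})_{\pi(k)}=p_{\pi(k)}\pm\epsilon$, $(\rho^{\pm})_{\pi(k+1)}=p_{\pi(k+1)}\mp\epsilon$, and keep the remaining entries equal to those of $\rho$, so that $\rho=\tfrac{1}{2}(\rho^++\rho^-)$ and only the $k$-th elbow is displaced, to $(x_k,y_k\pm\epsilon)$. For small enough $\epsilon>0$ nonnegativity is preserved, the perturbed elbow still lies strictly below $\beta(\rho_{init})$, and the slope inequalities defining concavity of $\beta(\rho^{\pm})$ remain compatible with the order $\pi$, so every other thermomajorization constraint continues to hold. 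Consequently $\rho^{\pm}\in\rho_{init}^{TP}$ and $\rho$ lies in the relative interior of a segment, contradicting extremality.

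The main obstacle I anticipate is the treatment of \emph{fake} elbows, i.e.\ indices $k$ at which consecutive slopes of $\beta(\rho)$ coincide and $\pi$ is consequently not unique. There the naive single-index perturbation can violate the slope monotonicity required by a $\beta$-order in one of the two directions, so the remedy is either to reselect $\pi$ among its equivalent re-orderings so that the offending elbow becomes a genuine corner, or to compose two mass shifts on the collinear segment of $\beta(\rho)$ to secure a valid two-sided perturbation; the boundary case $p_{\pi(k)}=0$ is handled analogously by choosing the sign of $\epsilon$ and pairing it with a compensating shift at another index.
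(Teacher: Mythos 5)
Your ``if'' direction is correct and is essentially the paper's argument in a cleaner coordinate system: the paper iterates over segments in decreasing order of slope and forces each slope of $\sigma_1,\sigma_2$ to saturate, whereas you force the $d$ prefix-sum constraints to saturate and difference them; the content is the same (active constraints at $\rho$ kill any nontrivial decomposition), and your version is tidier provided you state explicitly the standard equivalence between Lemma~\ref{lem5} and the family of subset inequalities $\sum_{j\in A}\sigma_j\le\beta(\rho_{init})(\sum_{j\in A}s_j)$. Your ``only if'' direction, however, is a genuinely different route from the paper's: the paper proves completeness by an explicit recursive splitting of an \emph{arbitrary} $\rho\in\rho_{init}^{TP}$ into a convex combination of $2^{d-1}$ tightly thermomajorized states (pushing one elbow at a time onto $\beta(\rho_{init})$), which as a by-product is reused later for the facial structure; you instead argue the contrapositive by a local two-level perturbation exhibiting a non-tight $\rho$ as a midpoint. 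Your route is more economical for the theorem itself but yields less.

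The gap is in the feasibility of $\rho^{\pm}$. You check only the $d$ prefix constraints of the fixed order $\pi$ (``only the $k$-th elbow is displaced''), but that claim presupposes that $\pi$ remains a valid $\beta$-order of $\rho^{\pm}$, which fails exactly when $\partial_{k-1}=\partial_k$ or $\partial_k=\partial_{k+1}$: then $\beta(\rho^{\pm})$ reorders and acquires \emph{new} elbows whose feasibility must be verified, equivalently the constraints for non-prefix subsets $A$ containing exactly one of $\pi(k),\pi(k+1)$ must be strictly slack. You flag this, but your first remedy does not work in general: if the offending elbow is a fake elbow in the interior of a maximal collinear stretch of $\beta(\rho)$, no admissible reordering turns it into a genuine corner, and it can happen (already for $d=3$ with $\partial_1=\partial_2>\partial_3$) that \emph{all} genuine corners lie on $\beta(\rho_{init})$ while only the fake elbow is strictly below. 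The perturbation still succeeds there, but proving it requires the concavity fact you never invoke: a concave function dominating a line segment and agreeing with it at an interior point must agree at the segment's endpoints (equivalently, agreement at three collinear points forces agreement on the whole interval). This is precisely what shows that the relocated elbows of the reordered $\beta(\rho^{\pm})$ remain strictly below $\beta(\rho_{init})$ whenever the original offending elbow is. Once you add that lemma the argument closes; note also that your worry about $p_{\pi(k)}=0$ or $p_{\pi(k+1)}=0$ is vacuous, since a vanishing slope at or after position $k$ forces $y_k=1$ and hence forces elbow $k$ onto $\beta(\rho_{init})$, contradicting the assumption that it is an offending elbow.
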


\begin{proof}[Proof: if part]
If $\beta(\rho)$ is thermomajorized by $\beta(\rho_{init})$, then $\rho$ belongs to $\rho_{init}^{TP}$ due to Lemma \ref{lem5}. Moreover, $\rho$ is an extremal point of this set: if it was not true, then there would exist two different states $\sigma_{1}$, $\sigma_{2}$ belonging to $\rho_{init}^{TP}$, such that $\rho$ could be created as their non-trivial convex combination. But this would imply that thermomajorization curve of at least one of these states is not thermomajorized by $\beta(\rho_{init})$, therefore  contradicting the fact that $\sigma_{1}\in\rho_{init}^{TP}$. This implication is visible from the following reasoning: for a thermomajorozation curve $\beta(\rho)$, by $\beta_{i}(\rho)$ we will denote the slope of the segment of length given by $s_{i}$. $\rho=a\sigma_{1}+(1-a)\sigma_{2}$ implies $\beta_{i}(\rho)=a\beta_{i}(\sigma_{1})+(1-a)\beta_{i}(\sigma_{2})$ for every $i=0,\dots,d-1$. Therefore, if we choose $i$ such that $\beta_{i}(\rho)$ is the highest slope of $\beta(\rho)$, from the fact that $\beta(\rho)$ is \textit{tightly thermomajorized} by $\beta(\rho_{init})$ we see that $\beta_{i}(\rho)$ is the maximal slope that the segment of lenght $s_{i}$ can take, such that $\beta(\rho)$ is \textit{thermomajorized} by $\beta(\rho_{init})$. But from the convex combination relation we have that either $\beta_{i}(\sigma_{1})>\beta_{i}(\rho)$ or $\beta_{i}(\sigma_{2})>\beta_{i}(\rho)$ or $\beta_{i}(\sigma_{1})=\beta_{i}(\sigma_{2})=\beta_{i}(\rho)$. Therefore, in the first two cases $\sigma_{1}\notin \rho_{init}^{TP}$ or $\sigma_{2}\notin \rho_{init}^{TP}$ and we arrive with the thesis. In the third case, we proceed to the segment characterized by $i$ such that it has second-highest slope of $\beta(\rho)$. Again, as $\beta(\rho)$ is tightly thermomajorized by $\beta(\rho_{init})$, $\beta_{i}(\rho)$ is the highest possible slope of the segment of length $s_{i}$, provided the slope of the segment of the highest slope is fixed according to the previous step. Again, creating $\rho$ as a mixture of $\sigma_{1}$ and $\sigma_{2}$ would lead to a conclusion that either $\beta_{i}(\sigma_{1})>\beta_{i}(\rho)$ or $\beta_{i}(\sigma_{2})>\beta_{i}(\rho)$ or $\beta_{i}(\sigma_{1})=\beta_{i}(\sigma_{2})=\beta_{i}(\rho)$. By iterating this procedure for consecutive segments, according to descending order of the slopes of $\beta(\rho)$, we see that the only allowed decomposition of $\rho$ that is tightly thermomajorized by $\rho_{init}$ into $\sigma_{1,2}\in\rho_{init}^{TP}$ is for $\sigma_{1,2}=\rho$. Therefore, $\rho\in Extr[\rho_{init}^{TP}]$. 
\end{proof}

\begin{proof}[Proof: only if part]
We show that every state $\rho\in\rho_{init}^{TP}$ can be represented as  $\rho=\sum p_{i}\sigma_{i}$, where $\sum p_{i}=1$, $0<p_{i}$ and thermomajorization curves of $\sigma_{i}$, $\beta(\sigma_{i})$, are all tightly thermomajorized by $\beta(\rho_{init})$ -- therefore, this set of extremal points is complete. First, we notice that for a 2 level system, every thermal process can be described with just two extremal TPs: $\mathcal{I}=(\begin{smallmatrix}1&0\\0&1\end{smallmatrix})$ and $\mathcal{B}=(\begin{smallmatrix}1-q_{10} & 1\\ q_{10} & 0\end{smallmatrix})$. Therefore, if $\rho\in\rho_{init}^{TP(2)_{\beta,H_{S}}}$, then $\rho=((1-\alpha)\mathcal{I}+\alpha\mathcal{E})\rho_{init}$, for $0\leq\alpha\leq 1$. The corresponding curve, $\beta(\rho(\alpha))$, has an elbow on two possible vertical lines (see Fig.\ref{2levels}). For $\alpha=0$, the elbow is on the initial curve $\beta(\rho_{init})$, it goes down with increasing $\alpha$ to reach a line characterizing a Gibbs state $\beta(\rho_{\beta})$, switches lines and continous up, to reach $\beta(\rho_{init})$ again for $\alpha=1$. In this way, we can achieve all states of $\rho_{init}^{TP(2)_{\beta,H_{S}}}$ characterized by curves with elbows lying between $\beta(\rho_{init})$ and $\beta(\rho_{\beta})$ on two specified lines. In what follows, we will be using this to decompose a given $d$-level state belonging to $\rho_{init}^{TP(d)_{\beta,H_{S}}}$ into two states that have equal occupations on all (d-2) levels, apart from selected two. Difference in occupations on these two levels makes one of the elbows from each of corresponding $\beta$-curves to lye on a different position, as in Fig.\ref{2levels} for $\beta(\rho(0)))$ and $\beta(\rho(1))$, while rest of the elbows from two curves lye on the same positions.   

\begin{figure}[h]
\centering
\includegraphics[width = 1\linewidth]{./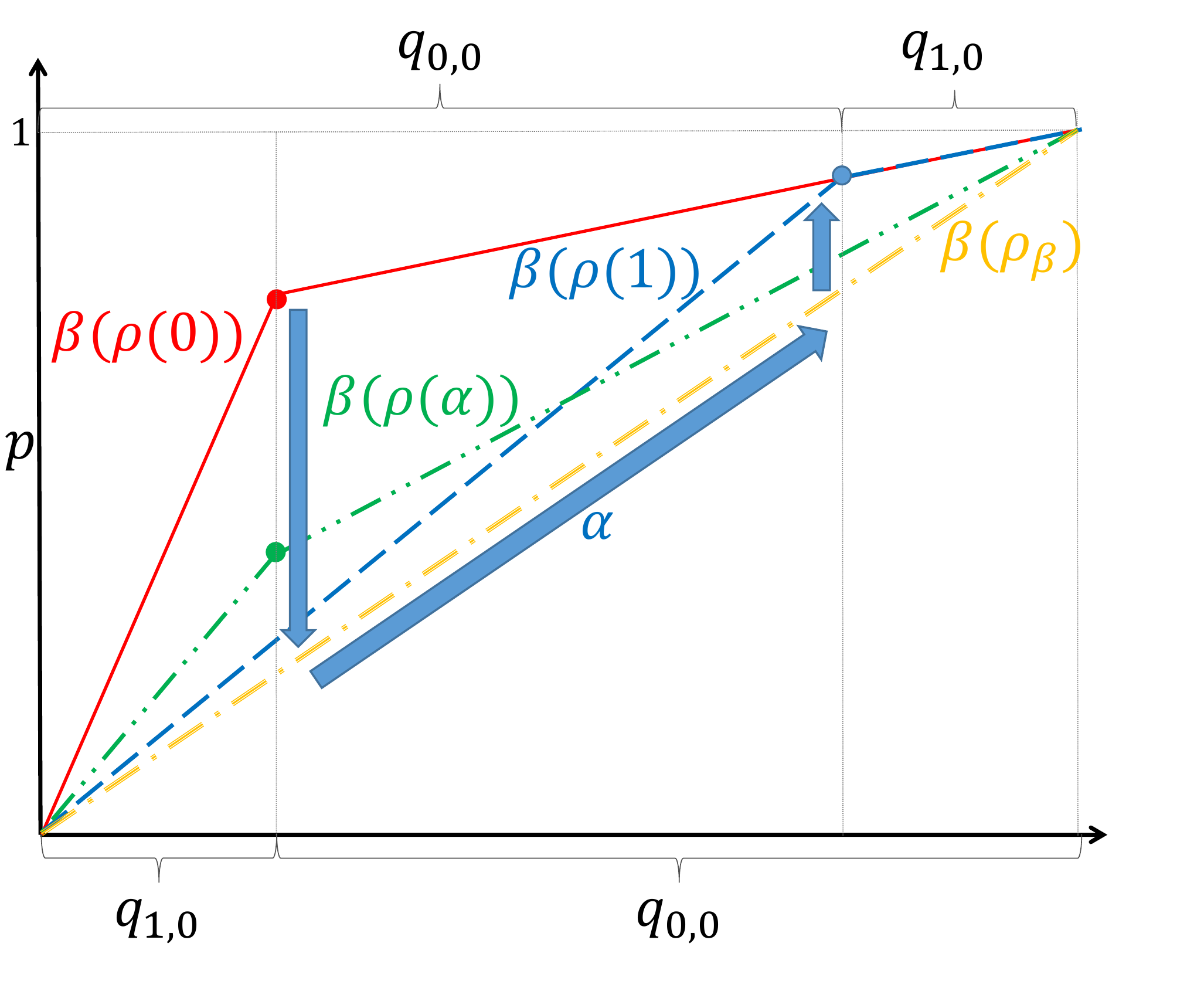}
\caption{\label{2levels} 
All states $\rho\in\rho_{init}^{TP(2)_{\beta,H_{S}}}$ can be represented as $\rho(\alpha)=((1-\alpha)\mathcal{I}+\alpha\mathcal{E})\rho_{init}$, with value $\alpha\in[0,1]$ determining the position of the only elbow of thermomajorization curve $\beta(\rho(\alpha))$. The movement of the elbow associated with continuous increase of $\alpha$ is marked by arrows. For $\alpha=1/(1+q_{10})$ marking a transition to a Gibbs state $\rho_{\beta}$, the elbow disappears, to reemerge for higher $\alpha$ on a different vertical line.}
\end{figure}

First, decompose the curve $\beta(\rho)=c_{1,1}\gamma_{1,1}'+c_{1,2}\gamma_{1,2}'$  (see Fig. \ref{step1}a) into a convex combination of curves $\gamma_{1,1}'$ and $\gamma_{1,2}'$ such that $\gamma_{1,1}'$ has the same $\beta$-order as $\beta(\rho)$, i.e. $\pi(\gamma_{1,1}')=\pi(\beta(\rho))$, and $\gamma_{1,2}'$ has the same $\beta$-order as $\beta(\rho)$, except for the last two segments, which are permuted: $\pi(\gamma_{1,2}')=P_{d-1,d}\pi(\beta(\rho))$, where $P_{d-1,d}$ marks the permutation between the segments that are at the position $d-1$ and $d$ in the following vector. Demand also that last elbows of $\gamma_{1,1}'$ and $\gamma_{1,2}'$ lie on $\beta(\rho_{init})$. Other elbows of $\gamma_{1,1}'$ and $\gamma_{1,2}'$ have the same positions as in $\beta(\rho)$. This decomposition corresponds to a decomposition of a state $\rho$ into two states that differ by occupations only on two selected levels. As we see from the case of two level systems, these requirements fix parameters $c_{1,1}$ and $c_{1,2}$, while preserving $c_{1,1}+c_{1,2}=1$. At the end, whenever curves $\gamma_{1,1}'$ or $\gamma_{1,2}'$ are not concave, we change order of segments such that we obtain proper thermomajorization curves, $\gamma_{1,1}$ and $\gamma_{1,2}$, respectively (Fig. \ref{step1}b). Therefore, we have obtained $\rho=c_{1,1}\rho(\gamma_{1,1})+c_{1,2}\rho(\gamma_{1,2})$, where $\rho(\gamma)$ marks a state $\rho$ associated with thermomajorization curve $\gamma$, and $\rho(\gamma_{1,1}), \rho(\gamma_{1,2})\in\rho_{init}^{TP}$.

\begin{figure}[h]
\centering
\includegraphics[width = 1\linewidth]{./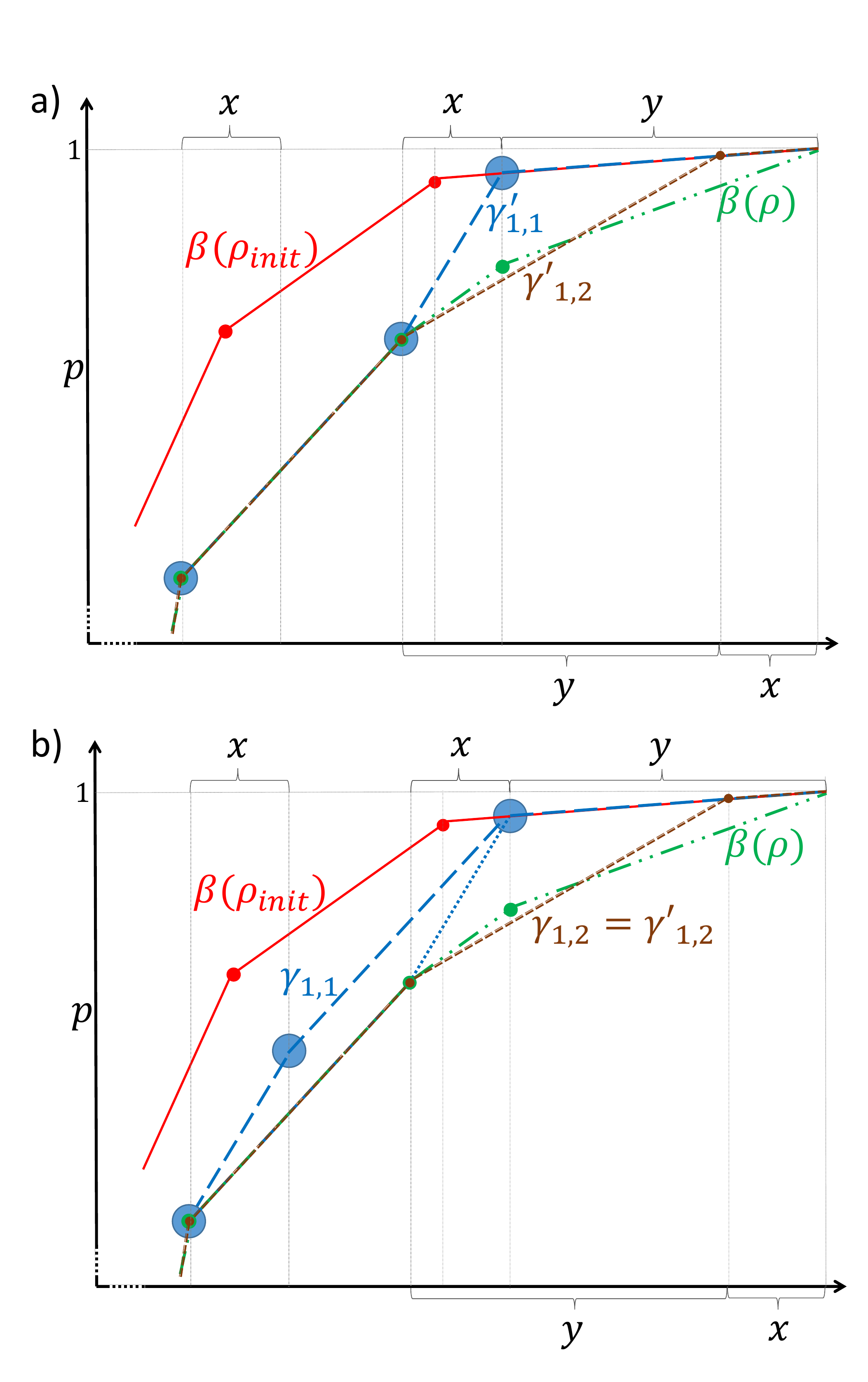}
\caption{\label{step1} 
Representing a state as a convex decomposition of extremal points (step 1). Decomposition of a state $\rho=c_{1,1}\rho(\gamma_{1,1})+c_{1,2}\rho(\gamma_{1,2})$, with last elbows of $\gamma_{1,1}$ and $\gamma_{1,2}$ lying on $\beta(\rho_{init})$. a) Validity of a construction comes from a decomposition of states of two-level systems (Fig. \ref{2levels}), trivially extended to states of higher dimension and equal occupations on the added levels. b) Permuting the segments turns $\gamma_{1,1}'$ and $\gamma_{1,2}'$ into $\gamma_{1,1}$ and $\gamma_{1,2}$, respectively, and asserts that the curves are concave.} 
\end{figure}

\begin{figure}[h]
\centering
\includegraphics[width = 1\linewidth]{./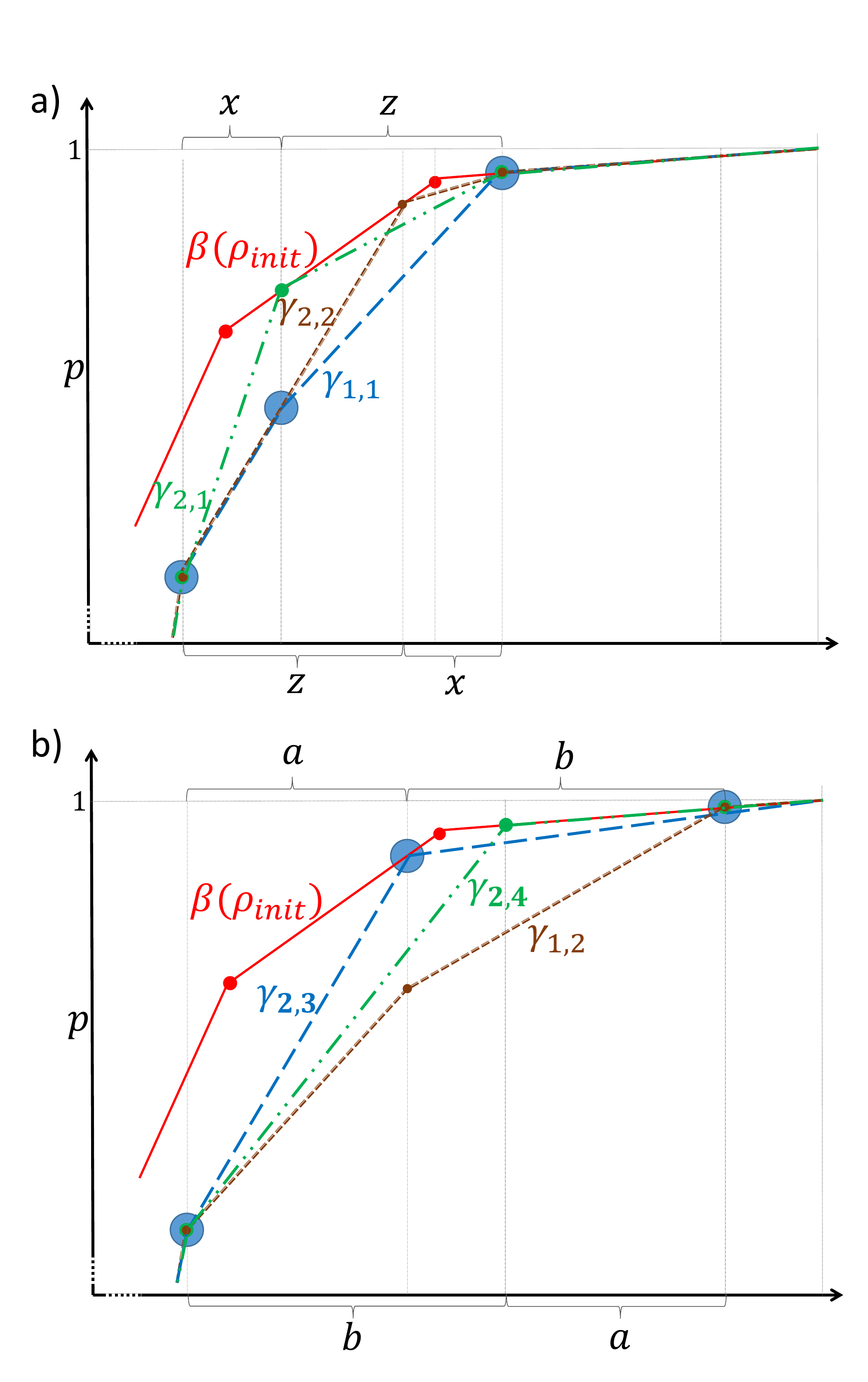}
\caption{\label{step2} 
Representing a state as a convex decomposition of extremal points (step 2). Decomposition of (a) $\rho(\gamma_{1,1})$ and (b) $\rho(\gamma_{1,2})$ into states represented by thermomajorization curves $\gamma_{2,i}$, $i=1,2,3,4$, with last 2 elbows lying on $\beta(\rho_{init})$.}
\end{figure}

In the second step, we decompose the curves $\gamma_{1,1}$ (Fig. \ref{step2}a) and $\gamma_{1,2}$ (Fig. \ref{step2}b) as $c_{1,1}\gamma_{1,1}=c_{2,1}\gamma_{2,1}'+c_{2,2}\gamma_{2,2}'$ and $c_{1,2}\gamma_{1,2}=c_{2,3}\gamma_{2,3}'+c_{2,4}\gamma_{2,4}'$, such that $\pi(\gamma_{2,1}')=\pi(\gamma_{1,1})$, $\pi(\gamma_{2,2}')=P_{d-2,d-1}\pi(\gamma_{1,1})$, $\pi(\gamma_{2,3}')=\pi(\gamma_{1,2})$, $\pi(\gamma_{2,4}')=P_{d-2,d-1}\pi(\gamma_{1,2})$, and such that two last elbows of $\gamma_{2,1}'$, $\gamma_{2,2}'$,   $\gamma_{2,3}'$, $\gamma_{2,4}'$ lie on $\beta(\rho_{init})$, while the position of the remaining elbows is like in the original lines, $\gamma_{1,1}$ and $\gamma_{1,2}$, respectively. Therefore, again we were using a decomposition of a given state into states with different occupations on just two energy levels, just now these two levels correspond to segments on a thermomajorization curves shifted towards left. Again, if necessary we permute segments to obtain concave curves $\gamma_{2,1}$, $\gamma_{2,2}$, $\gamma_{2,3}$ and $\gamma_{2,4}$,. In this way, parameters $c_{2,1}$, $c_{2,2}$, $c_{2,3}$, $c_{2,4}$, $\sum\limits_{i=1,4}c_{2,i}=1$ are fixed, and a decomposition $\rho=\sum\limits_{i=1,4}c_{2,i}\rho(\gamma_{2,i}$) is obtained. 

In general, we iterate this procedure for steps $j=1,\dots,d-1$, in each step dividing curves $c_{j-1,i}\gamma_{j-1,i}=c_{j,2i-1}\gamma_{j,2i-1}'+c_{j,2i}\gamma_{j,2i}'$ for all $i=1,\dots,2^{j}$ such that $\pi(\gamma_{j,2i-1}')=\pi(\gamma_{j-1,i})$, $\pi(\gamma_{j,2i}')=P_{d-j,d-j+1}\pi(\gamma_{j-1,i})$, such that last $j$ elbows of $\gamma_{j,2i-1}'$ and $\gamma_{j,2i}'$ lie on $\beta(\rho_{init})$, and the remaining elbows lie on $\gamma_{j-1,i}$. We take $c_{0,1}=1$ and $\gamma_{0,1}=\beta(\rho)$. We permute segments to obtain concave curves $\gamma_{j,2i-1}$ and $\gamma_{j,2i}$.  As each step fixes one more elbow of curves to lie on $\beta(\rho_{init})$, after $j=d-1$ steps all curves in the decomposition  $\beta(\rho)=\sum\limits_{i=1,2^{d-1}}c_{d-1,i}\gamma_{d-1,i}$ are tightly thermomajorozed by $\beta(\rho_{init})$. Also, as at each step we have $c_{j-1,i}=c_{j,2i-1}+c_{j,2i}$, it implies $\sum\limits_{i=1,2^{d-1}}c_{d-1,i}=1$. Therefore, after rewriting $c_{d-1,i}=p_{i}$ and $\rho(\gamma_{d-1,i})=\sigma_{i}$, we arrive with the convex decomposition $\rho=\sum p_{i}\sigma_{i}$. 
\end{proof}

\begin{rmk}
A state $\rho_{init}$ is a vertex of a set $\rho_{init}^{TP}$.
\end{rmk}

\begin{proof}
It trivially follows from Theorem \ref{Extremal} and the fact that for all states $\rho$, $\beta(\rho)$ is tightly thermomajorized by itself. 
\end{proof}

\begin{rmk}\label{interior}
A state $\rho\in\rho_{init}^{TP}$ lies on a face of $\rho_{init}^{TP}$ if and only if at least one of the elbows of $\beta(\rho)$ lies on $\beta(\rho_{init})$.
\end{rmk}

\begin{proof} [Proof: if part]
It is enough to show that for every $\rho\in\rho_{init}^{TP}$, with at least one of the elbows of $\beta(\rho)$ lying on $\beta(\rho_{init})$, there is a state $\rho_{1}\in\rho_{init}^{TP}$  such that there is no state  $\rho_{2}\in\rho_{init}^{TP}$ satisfying $\rho=\lambda\rho_{1}+(1-\lambda)\rho_{2}$, $\lambda\in[0,1]$. Choose $\rho_{1}$ such that $\beta(\rho_{1})$ has elbows on the same positions as $\beta(\rho)$, apart from the one lying on a vertical line that goes through a selected elbow of $\beta(\rho)$ lying on $\beta(\rho_{init})$: place
this elbow $\delta$-distance below the elbow $\beta(\rho)$. We can always choose $\delta>0$ to be small enough such that $\pi(\rho)=\pi(\rho_{1})$. In this case, in order for $\rho=\lambda\rho_{1}+(1-\lambda)\rho_{2}$ to be satisfied, all elbows of $\beta(\rho_{2})$ have to lye on elbows of $\beta(\rho_{1})$, apart from the one that lies on a vertical line that goes through a selected elbow of $\beta(\rho)$ lying on $\beta(\rho_{init})$: this elbow has to lye $\epsilon>0$ above the elbow of $\beta(\rho_{init})$. But this implies that $\rho_{2}\not\in\rho_{init}^{TP}$. 
\end{proof}

\begin{proof}[Proof: only if part]
If a state $\rho$ lies on a face of $\rho_{init}^{TP}$, then there exists some state $\rho_{1}\in\rho_{init}^{TP}$ such that there is no state $\rho_{2}\in\rho_{init}^{TP}$ which satisfies  $\rho=\lambda\rho_{1}+(1-\lambda)\rho_{2}$, $\lambda\in[0,1]$. Assume that all elbows of $\beta(\rho)$ lie below $\beta(\rho_{init})$. We will show that it leads to a contradiction, i.e. that for an arbitrary state $\rho_{1}\in\rho_{init}^{TP}$ we can construct a state $\rho_{2}$ which satisfies $\rho=\lambda\rho_{1}+(1-\lambda)\rho_{2}$. For curves that have all elbows below an initial curve, we can always modify the procedure of decomposing a state $\rho$ into extremal points of $\rho_{init}^{TP}$ (only if part of the proof of Theorem \ref{Extremal}) by taking $\gamma_{1,1}'$ such that $\pi(\gamma_{1,1}')=\pi(\beta(\rho_{1}))$ and $\gamma_{1,2}'$ such that $\pi(\gamma_{1,2}')=P_{d-1,d}\pi(\beta(\rho_{1}))$ in the first step, and then carry on with the procedure. At the end, we have $c_{d-1,1}=\lambda$, $\rho(\gamma_{d-1,1})=\rho_{1}$ and $\sum_{i=2,2^{d-1}}c_{d-1,i}=(1-\lambda)$,  $\sum_{i=2,2^{d-1}}\rho(\gamma_{d-1,i})=\rho_{2}$.   
\end{proof}

Therefore, we arrive with the following observation:
\begin{rmk}\label{interior}
Interior of the set $\rho_{init}^{TP}$ is composed by states $\rho$ such that $\beta(\rho)$ has all elbows below $\beta(\rho_{init})$.
\end{rmk}

Therefore, 
\begin{rmk}
For  $\rho_{init}\neq\rho_{\beta}$ (i.e., for $\beta(\rho_{init})$ with some slopes different), a Gibbs state $\rho_{\beta}=e^{-\beta H}/tr[e^{-\beta H}]$ lies in the interior of a set $\rho_{init}^{TP}$.
\end{rmk}

For a $d$-dimensional system, we can obtain a classification of hyperfaces of $\rho_{init}^{TP}$:

\begin{defn}[Hyperface]
A hyperface of a polytope $\rho_{init}^{TP}$ is a convex subset $H$ of states $\rho\in\rho_{init}^{TP}$ which cannot be expressed as a non-trivial convex combination of states from $\rho_{init}^{TP}/H$. 
\end{defn}

 For a given $\rho_{init}$, let us denote by $\mathbb{S}_{\rho_{init}}$ a set of all possible non-empty sets of elbows of curves tightly thermomajorized by $\beta(\rho_{init})$. For every $S\in\mathbb{S}_{\rho_{init}}$, we define $H_{S}$ as a set of all states $\rho$ such that their thermomajorization curves $\beta(\rho)$ coincide with $\beta(\rho_{init})$ exactly on $S$. Every element of the set $\mathbb{S}_{\rho_{init}}$ defines a hyperface:

\begin{thm}
$H_{S}$ is a hyperface of $\rho_{init}^{TP}$.
\end{thm}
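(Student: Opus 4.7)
The plan is to establish the two defining properties of a hyperface separately: first that $H_S$ is convex, and second that no state in $H_S$ admits a nontrivial convex decomposition into states from $\rho_{init}^{TP}\setminus H_S$. The key observation running through both steps is that each elbow $(X_e,Y_e)\in S$ corresponds to a fixed subset $I_e\subseteq\{0,\dots,d-1\}$ of energy levels with $X_e=\sum_{i\in I_e}s_i$, so the condition that $(X_e,Y_e)$ is an elbow of $\beta(\rho)$ lying on $\beta(\rho_{init})$ reduces to the linear equality $\sum_{i\in I_e}p_i=\beta(\rho_{init})(X_e)$ together with a consistency condition on the $\beta$-order of $\rho$, which is itself fixed by the $x$-coordinates appearing in $S$.

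For convexity, I would take $\rho_1,\rho_2\in H_S$ and $\rho=\lambda\rho_1+(1-\lambda)\rho_2$. Both states share the common $\beta$-order determined by $S$, so their thermomajorization curves have elbows at the same $x$-positions and the $y$-coordinates of the elbows combine linearly in $\v{p}$. At elbows indexed by $S$, a convex combination of two values equal to $Y_e=\beta(\rho_{init})(X_e)$ stays equal to $Y_e$; at the remaining elbow positions, a convex combination of values strictly below $\beta(\rho_{init})$ remains strictly below. Hence $\rho\in H_S$.

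For indecomposability, suppose $\rho\in H_S$ decomposes as $\rho=\lambda\rho_1+(1-\lambda)\rho_2$ with $\rho_1,\rho_2\in\rho_{init}^{TP}$ and $\lambda\in(0,1)$. For each $(X_e,Y_e)\in S$, Lemma \ref{lem5} combined with the fact that any subset sum lies at or below the concave upper envelope of $\beta(\rho_j)$ yields $\sum_{i\in I_e}p_{j,i}\le Y_e$. Averaging with weights $\lambda$ and $1-\lambda$ and comparing with the equality $\sum_{i\in I_e}p_i=Y_e$ forces $\sum_{i\in I_e}p_{j,i}=Y_e$ for both $j=1,2$, so the point $(X_e,Y_e)$ lies on $\beta(\rho_{init})$ for each $\rho_j$ as well. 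A symmetric argument rules out extra elbows of $\beta(\rho_j)$ on $\beta(\rho_{init})$ at some $(X',Y')\notin S$: propagating such a saturation back to $\rho$ through the same linearity-plus-inequality reasoning would introduce an elbow of $\beta(\rho)$ on $\beta(\rho_{init})$ outside $S$, contradicting $\rho\in H_S$. The main obstacle is that an equality $\sum_{i\in I_e}p_{j,i}=Y_e$ does not automatically promote the corresponding point to a genuine elbow of $\beta(\rho_j)$ when there are slope degeneracies among the ratios $p_{j,i}/s_i$; I expect to handle this by restricting the subsets $I$ entering the argument to initial segments in an appropriate $\beta$-order, so that inequalities and elbow positions are in bijection, or alternatively by an infinitesimal tie-breaking perturbation inside the face to ensure genuineness of each elbow before returning to the original $\rho_j$ by continuity.
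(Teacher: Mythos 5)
Your central argument is the same as the paper's: each point $(X_e,Y_e)\in S$ fixes a subset $I_e$ of levels, the partial sum $\sum_{i\in I_e}p_i$ is a linear functional bounded above by $Y_e$ on all of $\rho_{init}^{TP}$ (any subset sum is dominated by the concave curve $\beta(\rho)$, which is dominated by $\beta(\rho_{init})$), and saturation at $\rho$ forces saturation at both components of any convex decomposition. This is precisely the content of the paper's cases [A]--[D], which you compress cleanly by invoking the subset-sum bound once instead of arguing separately about where the segments of $\beta(\rho_1)$ and $\beta(\rho_2)$ fall relative to the vertical line. You also supply the convexity check, which the definition of hyperface requires but the paper's proof omits; your phrase that both states ``share the common $\beta$-order determined by $S$'' is not literally correct ($S$ does not pin down the full order), but the conclusion follows from $\beta(\rho)(X)=\max_I\sum_{i\in I}p_i$ being a maximum of linear functionals, so that is cosmetic. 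Your closing worry about slope degeneracies also resolves itself: once $\sum_{i\in I_e}p_{j,i}$ equals the curve value at $X_e$, the set $I_e$ necessarily collects levels of maximal slope and is an initial segment in some valid $\beta$-order of $\rho_j$, so $(X_e,Y_e)$ is a genuine elbow of $\beta(\rho_j)$.

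The one genuine defect is the ``symmetric argument'' by which you try to show that $\rho_1,\rho_2$ coincide with $\beta(\rho_{init})$ on \emph{no more} than $S$. Saturation does not propagate in that direction: if $\sum_{i\in I'}p_{1,i}=Y'$ at some $(X',Y')\notin S$, the weighted average with $\rho_2$ can still be strictly below $Y'$, because $\rho_2$ may be strictly below there; no elbow of $\beta(\rho)$ is forced onto $\beta(\rho_{init})$ outside $S$, and no contradiction arises. Indeed the claim you are after is false: $\rho=\lambda\rho_{init}+(1-\lambda)\sigma$ with $\sigma$ strictly below $\beta(\rho_{init})$ can land in $H_S$ even though $\rho_{init}$ itself coincides everywhere, so a component of a decomposition may well lie in $H_{S'}$ for $S'\supsetneq S$. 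What your first step (and the paper's proof) actually establishes is that both components must pass through every point of $S$, i.e.\ they lie in the closed face $\bigcup_{S'\supseteq S}H_{S'}$, not in $H_S$ read as ``exactly $S$.'' This mismatch is really a defect of the paper's formulation --- its own proof silently assumes that $\rho_1\notin H_S$ forces $\beta(\rho_1)$ to dip below some point of $S$, which fails when the coincidence set strictly contains $S$ --- so the fix is to drop your symmetric step rather than to repair it: your main argument already proves exactly what the paper's does.
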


\begin{proof}
Assume that $H_{S}$ is not a hyperface. Therefore, there is $\rho$ from $H_{S}$ which can be represented as a convex combination $\rho=\lambda\rho_{1}+(1-\lambda)\rho_{2}$, $\lambda\in[0,1]$, $\rho_{1,2}\in\rho_{init}^{TP}$,  such that at least one state, $\rho_{1}$, belongs to $\rho_{init}^{TP}/H_{S}$. 
It implies that, on a vertical line passing through some point of $\beta(\rho_{init})$, being an elbow of some curve tightly thermomajorized by $\beta(\rho_{init})$ and belonging to $S$, $\beta(\rho_{1})$ lies below this elbow (see Fig. \ref{proof}). 
Let us denote state populations by vectors $\v{p}$, $\v{r}$ and $\v{q}$: $\rho_{1}=\text{diag}[\v{p}]$, $\rho_{2}=\text{diag}[\v{r}]$, $\rho=\text{diag}[\v{q}]$.

\begin{figure}[h]
\centering
\includegraphics[width = 1\linewidth]{./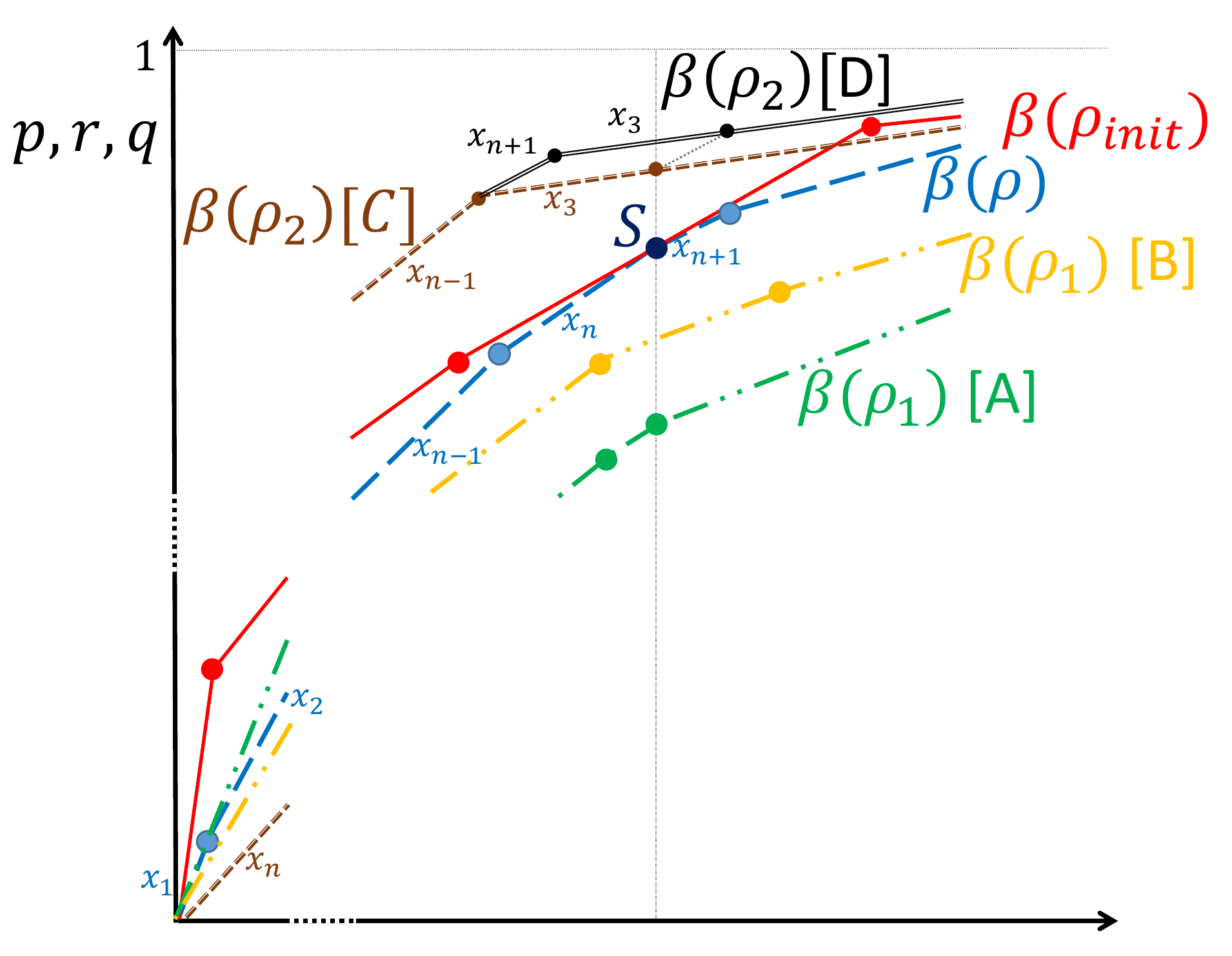}
\caption{\label{proof} 
A state $\rho$ belongs to a facet $H_{S}$ of $\rho_{init}^{TP}$ given by a set of points $S$. A point from this set defines a set of segments $\mathcal{X}=\{x_{1},\dots,x_{n}\}$ that lie to the left of it on $\beta(\rho)$. If $\rho_{1}\notin H_{S}$, then $\beta(\rho_{1})$ has an elbow below $S$ (case [A]) or a segment below $S$ (case [B]). Then, the state $\rho_{2}$ from the decomposition $\rho=\lambda\rho_{1}+(1-\lambda)\rho_{2}$, $\lambda\in[0,1]$ has a curve that has to lie over $S$, both in the case [C] with all segments of $\beta(\rho_{2})$ to the left from $S$ being taken from the set $\mathcal{X}$, as well as for different arrangements (case [D]). Therefore, $\rho_{2}\notin\rho_{init}^{TP}$. 
}
\end{figure}

Assume that $\beta(\rho_{1})$ has an elbow on this line (case A). If we denote by $\mathcal{X}=\{x_{1}, \dots, x_{n}\}$ a set of segments lying to the left of the elbow on $\beta(\rho)$, it is clear that, in order to have $\sum_{\mathcal{X}}q_{x}=\lambda \sum_{\mathcal{X}}p_{x}+(1-\lambda)\sum_{\mathcal{X}}r_{x}$, $\rho_{2}$ has to satisfy $\sum_{\mathcal{X}}q_{x}<\sum_{\mathcal{X}}r_{x}$ if elbow of $\beta(\rho_{1})$ lies below the elbow of in $S$ ($\sum_{\mathcal{X}}p_{x}<\sum_{\mathcal{X}}q_{x}$). This is because, if all segments of $\beta(\rho_{1})$ which lie to the left from the elbow belong to $\mathcal{X}$, we have $\sum_{\mathcal{X}}p_{x}<\sum_{\mathcal{X}}q_{x}$. If some segments from $\mathcal{X}$ lie to the right from the elbow on $\beta(\rho_{1})$, it means that $\sum_{\mathcal{X}}p_{x}$ is even smaller. Assume that segments from $\mathcal{X}$ of $\rho_{2}$ all lie to the left of the elbow (case $C$). Then, as $\sum_{\mathcal{X}}q_{x}<\sum_{\mathcal{X}}r_{x}$, we see that $\beta(\rho_{2})$ is not thermomajorized by $\beta(\rho_{init})$, and therefore $\rho_{2}\notin\rho_{init}^{TP}$. On the other hand, if some segments from the set $\mathcal{X}$ lie to the right of the elbow (case $D$), it implies that other segments in $\beta(\rho_{2})$ have even higher slopes, and the curve $\beta(\rho_{2})$ reaches even higher on a vertical line passing through the elbow, therefore $\rho_{2}\notin\rho_{init}^{TP}$. 

Now we will consider the case of $\beta(\rho_{1})$ not having an elbow on the vertical line passing through an elbow belonging to $S$ (case B). Again, it implies that $\sum_\mathcal{X}p_{x}<\sum_\mathcal{X}q_{x}$, as otherwise $\sum_\mathcal{X}q_{x}\leq\sum_\mathcal{X}p_{x}$ and the curve $\beta(\rho_{1})$ would not be thermomajorized by $\beta(\rho_{init})$ or $\beta(\rho_{1})$ would not lie below $S$ (as before, for $\beta(\rho_{1})$ to have some segments with higher slopes than segments from $\mathcal{X}$, it only increases the height of $\beta(\rho_{1})$ curve over the elbow from $S$ on $\beta(\rho_{init})$). Therefore, $\sum_\mathcal{X}p_{x}<\sum_\mathcal{X}q_{x}<\sum_\mathcal{X}r_{x}$ in order to have $\sum_{\mathcal{X}}q_{x}=\lambda \sum_{\mathcal{X}}p_{x}+(1-\lambda)\sum_{\mathcal{X}}r_{x}$, but this, as we showed before, leads to $\rho_{2}\notin\rho_{init}^{TP}$ (contradiction). 
\end{proof}

If we take $S_{1}, S_{2},\dots, S_{k}\in S_{\rho_{init}}$ such that $S_{1}\subseteq S_{2} \subseteq \dots \subseteq S_{k}$, then $H_{S_{1}}\supseteq H_{S_{2}} \supseteq \dots \supseteq H_{S_{k}}$. In particular, we see that every extremal point $\rho$ of $\rho_{init}^{TP}$ such that $\beta(\rho)$ has an elbow on $\beta(\rho_{init})$ on a given point, belongs to all facets of $\rho_{init}^{TP}$ which are composed by states with thermomajorization curves overlapping with $\rho_{init}$ on this point. Moreover, characterization of hyperfaces of the set of achievable states in terms of non-empty sets $S$ is complete:

\begin{rmk}
Every hyperface of $\rho_{init}^{TP}$ is $H_{S}$ for some $S\in S_{\rho_{init}}$.
\end{rmk}

If this was not true, then there would be some hyperface containing a state $\rho$ with $\beta(\rho)$ such that it has no elbows on $\rho_{init}$. From Remark \ref{interior} it stems that such a state belongs to the interior of $\rho_{init}^{TP}$, and therefore does not belong to any hyperface of $\rho_{init}^{TP}$.

Finally, we identify nearest neighbors of all extreme points by the following lemma:

\begin{lem}
For a state $\rho_{init}$ with all slopes of $\beta(\rho_{init})$ different, two distinct extremal states $\epsilon_{1}$, $\epsilon_{2}\in\rho_{init}^{TO}$, with orders of thermomajorization curves $\pi_{1}$ and $\pi_{2}$, respectively, are connected by an edge iff $\pi_{1}=P_{d-i,d-i+1}\pi_{2}$, for some $i\in\{2,\dots,d\}$.   
\end{lem}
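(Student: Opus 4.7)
The plan is to identify edges of $\rho_{init}^{TP}$ with the one-dimensional hyperfaces $H_{S}$ from the preceding classification, so $|S|=d-2$, and then translate the resulting combinatorial condition on $S$ into the claimed condition on $\beta$-orders. Throughout I work with the partial sums $A_{k}^{(j)}=\sum_{i=0}^{k}s_{\pi_{j}(i)}$, $k=0,\dots,d-2$, which give the $x$-coordinates of the elbows of $\beta(\epsilon_{j})$, so that the elbows shared between $\beta(\epsilon_{1})$ and $\beta(\epsilon_{2})$ correspond to coincident values $A_{k}^{(1)}=A_{k}^{(2)}$.

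For the only-if direction, I will first show that if $\epsilon_{1},\epsilon_{2}$ are joined by an edge $e$, then the midpoint $\rho=\frac{1}{2}(\epsilon_{1}+\epsilon_{2})$ lies on a hyperface $H_{S}$ with $|S|=d-2$ containing $e$, using Remark~\ref{interior} and the dimension count $\dim H_{S}=d-1-|S|$. Next, I will argue via the upper-envelope description of the thermomajorization curve that $\beta(\rho)$ touches $\beta(\rho_{init})$ at a point $A\in S$ only if some single subset $T\subset\{0,\dots,d-1\}$ with $\sum_{i\in T}s_{i}=A$ is simultaneously a $\pi_{1}$-prefix and a $\pi_{2}$-prefix (as sets); for this to hold at each of the $d-2$ points of $S$, the shared elbows must occur at the same index, i.e.\ $A_{k}^{(1)}=A_{k}^{(2)}$ for all but one index $k^{*}$. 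Using the distinctness of $s_{0},\dots,s_{d-1}$, an induction on $k$ then shows $\pi_{1}(k)=\pi_{2}(k)$ whenever both $A_{k-1}^{(1)}=A_{k-1}^{(2)}$ and $A_{k}^{(1)}=A_{k}^{(2)}$, so $\pi_{1}$ and $\pi_{2}$ agree off $\{k^{*},k^{*}+1\}$ and, being permutations of the same set, must differ there by the adjacent transposition $P_{k^{*},k^{*}+1}$.

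For the if direction, assuming $\pi_{1}=P_{d-i,d-i+1}\pi_{2}$, a direct computation of partial sums yields $A_{k}^{(1)}=A_{k}^{(2)}$ for all $k\neq d-i$, so the two curves share exactly $d-2$ elbows on $\beta(\rho_{init})$; calling this set $S$, both states lie in $H_{S}$, and the dimension count gives $\dim H_{S}=1$. I will then classify the extremal points of $H_{S}$: these are tightly thermomajorized states obtained from $S$ by adding one extra elbow on $\beta(\rho_{init})$ inside the unique gap of $x$-width $s_{\pi_{1}(d-i)}+s_{\pi_{1}(d-i+1)}$, which can be split only at offset $s_{\pi_{1}(d-i)}$ or $s_{\pi_{1}(d-i+1)}$ from the gap's left end, giving exactly $\epsilon_{1}$ and $\epsilon_{2}$. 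Hence $H_{S}=[\epsilon_{1},\epsilon_{2}]$ is an edge of $\rho_{init}^{TP}$.

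The main obstacle is the upper-envelope step in the only-if direction: a bare geometric coincidence $A_{k_{1}}^{(1)}=A_{k_{2}}^{(2)}$ of elbow abscissae does not automatically produce an elbow of $\beta(\rho)$, and ruling out "shifted" matchings (where the common value occurs at different positions in the two sequences, yielding set-level but not index-level agreement) is where the all-distinct-slope hypothesis on $\beta(\rho_{init})$ and the distinctness of the $s_{i}$ are essential.
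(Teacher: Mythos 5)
Your route is genuinely different from the paper's. The paper argues directly on the thermomajorization diagram: for the \emph{if} part it shows, segment by segment from the left, that no point of the segment $[\epsilon_{1},\epsilon_{2}]$ can be written as a convex combination involving a state off the segment; for the \emph{only if} part it feeds the midpoint into the decomposition procedure from the proof of Theorem~\ref{Extremal} and observes that two or more elbows strictly below $\beta(\rho_{init})$ generate at least four extremal points, contradicting uniqueness of the representation of a relative-interior point of an edge. You instead go through the face lattice: edges are identified with hyperfaces $H_{S}$ of the right size, and the condition on $S$ is translated into the adjacent-transposition condition via common prefix sets of $\pi_{1}$ and $\pi_{2}$. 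Your approach buys a cleaner combinatorial picture --- an edge corresponds to a chain of common prefix sets missing exactly one cardinality, and its two endpoints are the two orders filling the single gap of width two --- and it localizes the geometry in facts already proved (extremal points are exactly the tightly thermomajorized states; each $H_{S}$ is a face). The paper's argument is more self-contained but considerably harder to audit.

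Two steps need shoring up. First, the dimension formula $\dim H_{S}=d-1-|S|$ is asserted but proved nowhere (not in the paper either), and it is the load-bearing step of your \emph{only if} direction (``edge $\Rightarrow|S|=d-2$''). You can bypass it: take $S$ to be the full set of points where $\beta(\rho)$, $\rho=\tfrac{1}{2}(\epsilon_{1}+\epsilon_{2})$, touches $\beta(\rho_{init})$; then $H_{S}$ is the minimal face containing $\rho$, hence equals the edge, and its extremal points are exactly the tightly thermomajorized states whose orders admit every common prefix set of $\pi_{1},\pi_{2}$ as a prefix. If the two chains of prefix sets differed in more than one cardinality, there would be more than two such orders and hence more than two vertices on the edge --- a contradiction. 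This replaces the dimension count by the extreme-point count you already use in the \emph{if} direction. Second, your worry about ``shifted matchings'' dissolves once you work with prefix \emph{sets} rather than partial sums: a touching point of the midpoint forces a single subset $T$ that is simultaneously a $\pi_{1}$- and a $\pi_{2}$-prefix, and prefixes of a fixed order are determined by their cardinality, so no appeal to distinctness of subset sums of the $s_{i}$ is needed (such distinctness does not follow from the stated hypotheses anyway). Finally, both your argument and the paper's silently assume that distinct orders compatible with $S$ yield distinct states; this can fail when two consecutive elbows land inside a single linear piece of $\beta(\rho_{init})$, a degeneracy not excluded by the all-slopes-different hypothesis, though the ``distinct extremal states'' clause of the lemma sidesteps its consequences.
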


\begin{proof}[Proof: if part]
For two extremal points to share an edge means that for all states $\rho_{\lambda}=\lambda\epsilon_{1}+(1-\lambda)\epsilon_{2}$, $\lambda\in(0,1)$, belonging to $\rho_{init}^{TP}$, and for every state $\sigma\in\rho_{init}^{TP}$ and $\sigma\notin\{\rho_{\lambda}\}_{\lambda}$, there is no state $\sigma'\in\rho_{init}^{TP}$ such that $\rho_{\lambda}=\gamma\sigma+(1-\gamma)\sigma'$ for $\gamma\in[0,1]$. For an arbitrary $\sigma\notin\{\rho_{\lambda}\}_{\lambda}$, there is at least one elbow of $\beta(\rho_{\lambda})$ lying on $\beta(\rho_{init})$, such that $\beta(\sigma)$ lies below it.

Assume that, looking from the left side of the thermomajorization diagram, the first elbow of $\beta(\rho_{\lambda})$ satisfies this property. It means that the segment of $\beta(\sigma)$ of the same length as the first segment of $\beta(\rho_{\lambda})$ has to have smaller slope than the slope of this segment in $\beta(\rho_{\lambda})$. Therefore, a corresponding segment in $\beta(\sigma')$ has to be have bigger slope than the slope of this segment in $\beta(\rho_{\lambda})$, as only in this way we can achieve $\rho_{\lambda}=\gamma\sigma+(1-\gamma)\sigma' \iff \forall_{i}\beta_{i}(\rho_{\lambda})=\gamma\beta_{i}(\sigma)+(1-\gamma)\beta_{i}(\sigma')$. But this leads to a contradiction with the requirement that $\sigma\in\rho_{init}^{TP}$, as $\beta(\sigma')$ would not be thermomajorized by $\beta(\rho_{init})$.  

If we assume that $\beta(\sigma)$ coincides with $\beta(\rho_{\lambda})$ on its first elbow lying on $\beta(\rho_{init})$, but the second such elbow of $\beta(\rho_{\lambda})$ lies above $\beta(\sigma)$, then  it means that $\beta(\sigma)$ had to have an elbow on the first elbow of $\beta(\rho_{\lambda})$ -- otherwise, $\beta(\rho_{\lambda})$ would not be thermomajorized by $\beta(\rho_{init})$ for all of its slopes different. Therefore, we conclude that $\beta(\rho_{\lambda})$, $\beta(\sigma)$ and $\beta(\sigma')$ are identical on their first segments. We can therefore treat the first elbow of $\beta(\rho_{\lambda})$ as the effective start of a new thermomajorization diagram, and apply the argument from the last step again.

We continue doing so for all the segments of $\beta(\rho_{\lambda})$, until we reach a segment $d-i$. If no elbow of $\beta(\rho_{\lambda})$ is lying above $\beta(\sigma)$ on this side of the original thermomajorization diagram, we apply the same reasoning the the right side of the diagram, until we reach the segment $d-i+1$. In this way, we are guaranteed to find an elbow of $\beta(\rho_{\lambda})$ that lies above $\beta(\sigma)$, as otherwise $\sigma\in\rho_{\lambda}$. At such event, we reach a conclusion $\sigma'\notin\rho_{init}^{TP}$, as shown above. 
\end{proof}

\begin{proof}[Proof: only if part]
Assume that for two distinctive extremal states $\epsilon_{1}$ and $\epsilon_{2}$, their respective orders cannot be related via $\pi_{1}\neq P_{d-i,d-i+1}\pi_{2}$, for any of $i\in\{2,\dots,d\}$. It means that a construction $\rho_{\lambda}=\lambda\epsilon_{1}+(1-\lambda)\epsilon_{2}$ for $\lambda\in(0,1)$ results in $\beta(\rho_{\lambda})$ that has at least 2 elbows below $\beta(\rho_{init})$, as $ P_{d-i,d-i+1}$ is the only relation between the orders of distinctive extremal states that leads to 1 elbow below  $\beta(\rho_{init})$ for all slopes of $\beta(\rho_{init})$ different. According to the procedure for decomposing a given state into extremal states (only if part of the proof of Theorem \ref{Extremal}), every elbow of $\rho_{\lambda}$ leads to a generation of 2 extremal points in the convex decomposition of this state into extremal points of $\rho_{init}^{TP}$. Therefore, $\rho_{\lambda}$ can be decomposed into at least 4 states, which contradicts the uniqueness of the decomposition $\rho_{\lambda}=\lambda\epsilon_{1}+(1-\lambda)\epsilon_{2}$, $\lambda\in(0,1)$.  
\end{proof}

\section{Geomery of the set of TPs}\label{sec:TPs}

In this section we are going to use properties of biplanar transportation matrices. Every Thermal Process $T$ acting on a $d$ level system can be turned into a transformation matrix $P$ by a transformation $P=T \text{diag}[1,q_{10}\dots,q_{d-1,0}]$. $P$ is characterized by vectors $\v{r}=\v{c}=[1,q_{10}\dots,q_{d-1,0}]$. $P$ and $T$ have identical adjacency matrices, and therefore the same graph representations. Therefore, all extremal TPs can be associated with forests with no isolated cycles on bipartite graphs, having at most $2d-1$ positive entries (note however, that some forests with isolated vertices may exist only for a specific choice of $\v{r}$ and $\v{c}$, so not all of them lead to extremal Thermal Processes). Moreover, extremal TPs that correspond to biplanar extremal transportation matrices play a special role in the characterization of transitions allowed by Thermal Operations. Every such TP can be attributed two quantities: an order $\pi_{in}(T)$, which is a sequence of labels on the left side of the bipartite graph of the associated transportation matrix $P$, and $\pi_{out}(T)$, which is a sequence of labels on the right side of the bipartite graph of the associated transportation matrix $P$, such that for these sequences the graph is plain. Note that these orders may not be given uniquely.  

\subsection{Biplanar Extremal Thermal Processes}

\begin{lem}\label{14}[Tight thermomajorization relation on states defines a biplanar extremal Thermal Process]
Every pair of states $\rho_{out}$, $\rho_{init}$, such that $\v{p}\coloneqq\beta(\rho_{init})$ tightly thermomajorizes $\v{r}\coloneqq\beta(\rho_{out})$, determines a biplanar extremal Thermal Process $T$ such that $T\rho_{init}=\rho_{out}$, $\pi_{in}(T)=\pi(p)$ and $\pi_{out}(T)=\pi(r)$. If all slopes of ${p}$ are different, then $T$ is the only TP that transforms $\rho_{in}$ into $\rho_{out}$. 
\end{lem}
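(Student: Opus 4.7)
The plan is to work with the transportation-matrix picture developed in Section~\ref{sec:TPs}. For any TP $T$, set $P \coloneqq T\,\mathrm{diag}(\v{s})$ with $\v{s}=(1,q_{10},\dots,q_{d-1,0})$; stochasticity gives column sums $s_j$ and Gibbs-preservation (equivalent to $T\v{s}=\v{s}$) gives row sums $s_i$, so $P$ is a transportation matrix with $\v{c}=\v{r}=\v{s}$. Let $\pi_{in}\coloneqq\pi(\v{p})$ and $\pi_{out}\coloneqq\pi(\v{r})$ be the two $\beta$-orders, and write $\sigma_k^{in}\coloneqq\sum_{j=0}^{k}s_{\pi_{in}(j)}$, $\sigma_k^{out}\coloneqq\sum_{i=0}^{k}s_{\pi_{out}(i)}$ for the cumulative $s$-coordinates of the elbows of the two curves. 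My construction is the northwest-corner rule in the $\beta$-ordered basis,
\[
P_{\pi_{out}(i),\pi_{in}(j)} \coloneqq \bigl|[\sigma_{i-1}^{out},\sigma_i^{out}]\cap[\sigma_{j-1}^{in},\sigma_j^{in}]\bigr|,
\]
with $\sigma_{-1}=0$. The row and column sums match $\v{s}$ automatically, and this is literally the algorithm of Definition~\ref{Def4}: at each step the cell is filled with the minimum of the remaining row- and column-masses, exhausting either a row or a column, and one then advances along the other index without ever starting a disconnected tree. Hence the associated $T=P\,\mathrm{diag}(\v{s})^{-1}$ is a biplanar extremal TP with $\pi_{in}(T)=\pi_{in}$ and $\pi_{out}(T)=\pi_{out}$.

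Verification that $T\rho_{init}=\rho_{out}$ proceeds elbow by elbow. The cumulative image through the first $k+1$ output levels in $\beta$-order is
\[
\sum_{i=0}^{k}(T\v{p})_{\pi_{out}(i)} = \sum_{j}\frac{p_{init,j}}{s_j}\sum_{i=0}^{k}P_{\pi_{out}(i),j},
\]
and by construction $\sum_{i\leq k}P_{\pi_{out}(i),j}$ is the $s$-mass of $[0,\sigma_k^{out}]$ allocated to input column $j$, i.e.\ input segments are saturated in $\pi_{in}$-order until the $s$-budget $\sigma_k^{out}$ is exhausted. Summing gives exactly the piecewise-linear height of $\beta(\rho_{init})$ at $\sigma_k^{out}$, which by tight thermomajorization equals $\sum_{i\leq k}p_{out,\pi_{out}(i)}$. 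Taking successive differences in $k$ yields $(T\rho_{init})_{\pi_{out}(i)}=p_{out,\pi_{out}(i)}$ for every $i$, whence $T\rho_{init}=\rho_{out}$.

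For uniqueness under the distinct-slope hypothesis, let $T'$ be any TP realising the same transition and set $P'=T'\,\mathrm{diag}(\v{s})$. For every $k$ the same rewriting gives
\[
\sum_{j}X_j^{(k)}\,\frac{p_{init,j}}{s_j}=\beta(\rho_{init})(\sigma_k^{out}),\qquad X_j^{(k)}\coloneqq\sum_{i=0}^{k}P'_{\pi_{out}(i),j},
\]
subject to $0\leq X_j^{(k)}\leq s_j$ (column sums) and $\sum_j X_j^{(k)}=\sigma_k^{out}$ (row sums). But the right-hand side is by definition the value of the linear maximum $\max\sum_j y_j(p_{init,j}/s_j)$ over the same feasible set, so $X^{(k)}$ is an optimizer; strict distinctness of the slopes $p_{init,j}/s_j$ forces this optimizer to be unique, namely to fill input columns in decreasing order of slope until $\sigma_k^{out}$ is reached. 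Thus every $X_j^{(k)}$ is determined, and differencing in $k$ determines $P'_{\pi_{out}(i),j}$ for all $(i,j)$, giving $P'=P$ and $T'=T$.

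The main obstacle I expect is the uniqueness step: the bare equations $T'\rho_{init}=\rho_{out}$, $\v{1}^{T}T'=\v{1}^{T}$, $T'\v{s}=\v{s}$ are far from determining $T'$, and it is only after integrating them over the output elbows and recognising $\beta(\rho_{init})(\sigma_k^{out})$ as the value of a maximisation with a unique maximiser that the distinct-slope hypothesis can be brought to bear to pin down $P'$ column by column.
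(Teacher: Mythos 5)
Your construction is, in substance, the same map the paper builds: defining $P_{\pi_{out}(i),\pi_{in}(j)}$ as the overlap of the two horizontal partitions of $[0,Z]$ is exactly the paper's ``slope of the input segment times length of overlap'' prescription, and both proofs identify the resulting fill order with the procedure of Definition~\ref{Def4} to obtain extremality and biplanarity with the stated orders. Where you genuinely diverge is in how the remaining properties are verified, and in both places your route is cleaner. First, by phrasing the construction directly as a transportation matrix with $\v{r}=\v{c}=\v{s}$, stochasticity and Gibbs preservation of $T=P\,\mathrm{diag}(\v{s})^{-1}$ come for free from the row and column sums; the paper instead checks Gibbs preservation by an explicit case analysis on the diagram (the $Z_{\rightarrow},Z_{\leftarrow}$ bookkeeping), which your argument subsumes. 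Second, and more importantly, your uniqueness step is a complete argument where the paper's is largely a plausibility claim: you integrate $T'\rho_{init}=\rho_{out}$ over the output elbows, recognize $\beta(\rho_{init})(\sigma_k^{out})$ as the optimal value of the fractional-knapsack problem $\max\sum_j y_j\,p_j/s_j$ subject to $0\le y_j\le s_j$ and $\sum_j y_j=\sigma_k^{out}$, and use distinctness of the slopes to conclude that the maximizer is unique, which pins down $P'$ by differencing in $k$. The paper argues instead that the slope-transforming matrix $T^{s}$ must ``push every elbow as high as possible,'' which is the same greedy principle but is never turned into an explicit uniqueness statement for an optimizer. One small imprecision on your side: when an output elbow happens to coincide with an input elbow, the northwest-corner walk does start a new tree (a row and a column are exhausted simultaneously); this is still permitted by Definition~\ref{Def4}, so biplanarity is unaffected, but ``without ever starting a disconnected tree'' overstates the situation.
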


\begin{figure}[h]
\centering
\includegraphics[width = 1\linewidth]{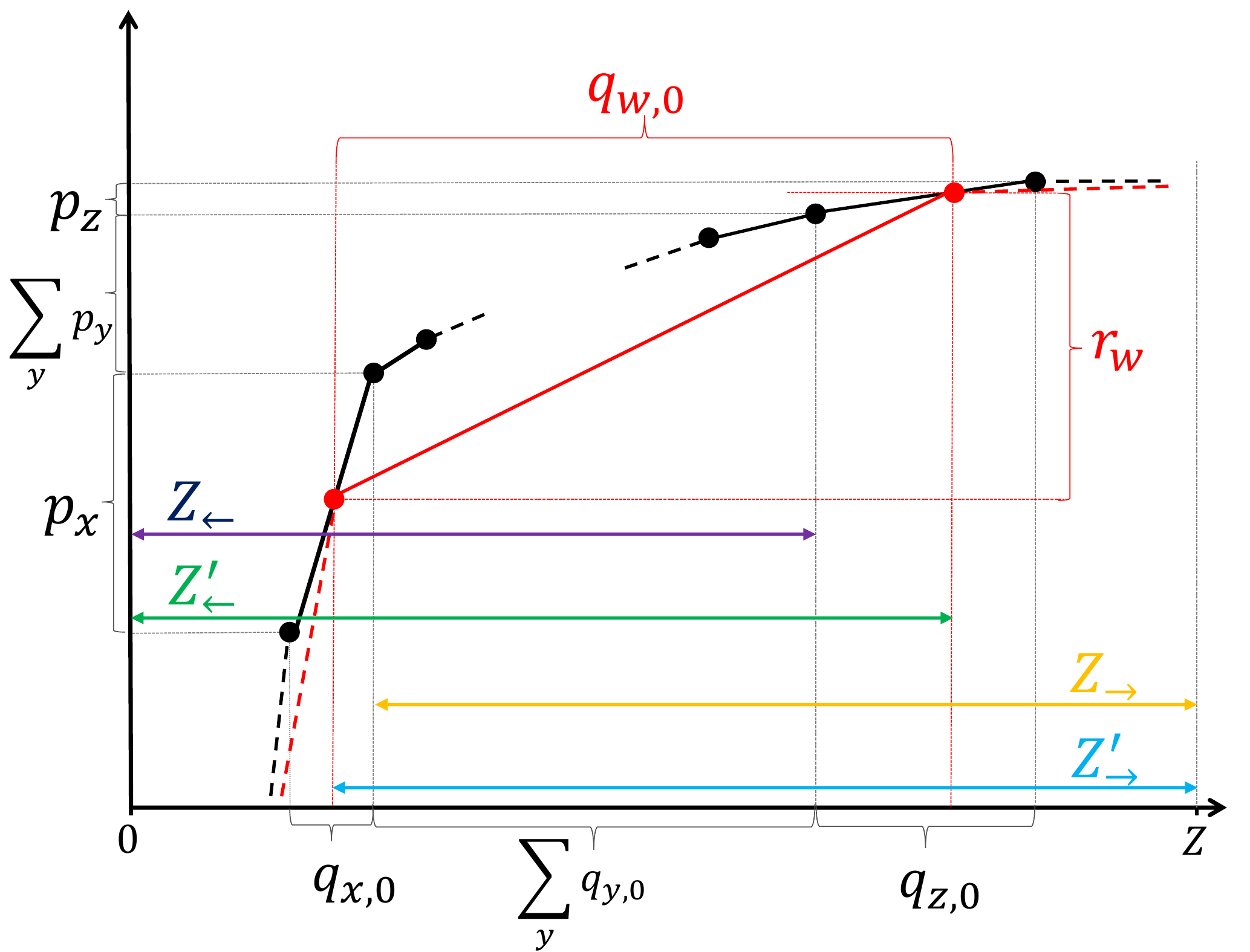}
\caption{\label{Gibbs} 
Construction of an extremal Thermal Process from thermomajorization diagram.
}
\end{figure}

\begin{proof}
$ $\newline

\underline{Thermal Process: Gibbs state preservation}.
Every $\beta$ fixes values of $q_{k,0}$ coefficients that determine lengths of segments of curves $\v{p}$ and $\v{r}$ on a thermomajorization diagram, and the association between states and TPs is done on the basis of thermomajorization curves $\v{p}$ and $\v{r}$.  If $\v{r}$ is tightly thermomajorized by $\v{p}$, then, from the thermomajorization diagram we propose a procedure that determines all components of the transformation. We will denote a slope of a segment $i$ on curve $\v{p}$ as $\partial\v{p}_{i}$, and, with a small abuse of notation, we will represent the state associated to the curve $\v{p}$ as a vector $\v{p}$, with entry $\v{p}_{i}$ on $i$-th level. 

In the case presented in Fig. \ref{Gibbs}, $\v{r}_{w}$ receives contributions from multiple segments of $\v{p}$. It is visible that for every $w$, $\v{r}_{w}$ can be formed from complete contributions from some levels of $\v{p}$ (we will label these levels by $y$), and at most two partial contributions from segments $x$ and $z$ of $\v{p}$ . These two partial contributions are a product of slopes $\partial\v{p}_{x}$ and $\partial\v{p}_{z}$ of respective segments of the curve $\v{p}$, and the lenghts of these segments. These lengths can be calculated as differences in components of partition function. We will denote them by $Z_{\rightarrow}=\sum {q_{i,0}/Z}$ for a sum over lengths of segments situated to the right from the point where the first segment included in $\sum \v{p}_y$ originates, while $Z_{\leftarrow}=\sum {q_{i,0}/Z}$ is a sum over lengths of segments situated to the left from the point which the last segment included in $\sum \v{p}_y$ has reached. $Z^{'}_{\rightarrow}$ and $Z^{'}_{\leftarrow}$ are defined analogously, but now with reference points changed to be initial and end points of the segment $w$ of the curve  $\v{r}$. From the definition, length of this segment is $q_{w,0}$, while its height is $\v{r}_{w}$. We arrive with the following formula describing the map transforming state $\v{p}$ into $\v{r}$ which is tightly thermomajorized by $\v{p}$: $\v{r}_{w}=\partial\v{p}_{x}(Z^{'}_{\rightarrow}-Z_{\rightarrow})  +\sum\limits_{y}\v{p}_{y} + \partial\v{p}_{z}(Z'_{\leftarrow}-Z_{\leftarrow})$. We have to show that this transformation is Gibbs preserving. In general we have $\partial\v{p}_{x}=p_{x}q_{0,x}$. If we start from a Gibbs state curve $\v{p}_{x}=q_{x,0}/Z$, then we arrive with $\v{r}_{w}=q_{x,0}q_{0,x}/Z(Z'_{\rightarrow}-Z_{\rightarrow})  +\sum\limits_{y}q_{y,0}/Z + q_{z,0}q_{0,z}/Z(Z'_{\leftarrow}-Z_{\leftarrow})=\Big(Z'_{\rightarrow}-Z_{\rightarrow}  +\sum\limits_{y}q_{y,0} + Z'_{\leftarrow}-Z_{\leftarrow} \Big)/Z$. But from Fig. \ref{Gibbs} it is visible that $-Z_{\rightarrow}-Z_{\leftarrow}+\sum\limits_{y}q_{y,0}=-Z$. Therefore, we have $\v{r}_{w}=\Big(Z'_{\rightarrow}+ Z'_{\leftarrow}-Z \Big)/Z=\Big(Z+q_{w,0}-Z \Big)/Z=q_{w,0}/Z$, which is a coefficient of a Gibbs state. 

The same conclusion holds for all other possible constructions of $\v{r}_{w}$: if $Z'_{\rightarrow}=Z_{\rightarrow}$, then $\sum\limits_{y}p_{y}=Z'_{\leftarrow}-Z_{\leftarrow}$, while for $Z'_{\leftarrow}=Z_{\leftarrow}$, we have $\sum\limits_{y}p_{y}=Z'_{\rightarrow}-Z_{\rightarrow}$, and one reaches the same conclusion about Gibbs state preservation of the process. Moreover, for all elbows of $\v{r}$ lying within one segment $f$ of $\beta(\v{p})$, we have  $Z_{\leftarrow}+Z_{\rightarrow}=Z$  and have $\sum\limits_{y}p_{y}=0$, for which $\v{r}_{w}=p_{f}q_{0f}q_{w0}$, which again equals to $q_{w,0}$ for initial Gibbs state $\v{p}_{f}=q_{f,0}/Z$. This exhausts the set of all possible geometrical relations between $\v{p}$ and a selected segment $\v{r}_{w}$ of the curve $\v{r}$.    

\underline{Thermal Process: Stochasticity}. Stochasticity of the transformation stems directly from the fact that $\v{r}$ is a curve on a thermomajorization diagram corresponding to a state: therefore every element $\v{p}_{i}$ is fully distributed into some set or $\{\v{r}_{j}\}_{j}$ (elements of every column of $T:T\rho_{init}=\rho_{out}$ sum to 1).

\underline{Extremality and biplanarity}. 
To show that $T$ is extremal and biplanar, it is enough to find a graph associated with $P$ which is plain, and to show that the graph is a forest with no isolated vertices. We will construct this graph by connecting a vertex $a$ from the right and $b$ from the left side of the graph whenever a ${b}$ on $\v{p}$ lies on the thermomajorization diagram above a segment $a$ on $\v{r}$, as it signifies the positive coefficient in $P$ on the position $(a,b)$ (and therefore a positive element in $T$). This leads to a graph of forests (because once a particular segment is considered, it does not reapper after we move to another segment on the same curve, so cycles are not possible) with no isolated vertices (because every segment lies below or above at least one segment). Furthermore, the graph is plain, with orders $\pi_{in}(T)$ and $\pi_{out}(T)$ fixed to be the same as orders $\pi(\v{p})$ and $\pi(\v{r})$, respectively. We see that this sequence of mappings is the one in which elements of the transportation matrix $P$ are fixed according to the defining procedure in Definition \ref{Def4}. 

\underline{Uniqueness}. Notice that the order of $\pi(\v{r})$ may not be given uniquely, as it is in principle possible to obtain a state with a curve $\v{r}$ that has more than one segments with the same slopes. On the other hand, if more than one segment has the same slope in $\v{p}$, it means that the transformation between the states may not be unique: In the extremal case, if all slopes of $\v{p}$ are the same, every TP performs the mapping (as every TP preserves a Gibbs state $\rho_{\beta}$). Therefore, we demand that all slopes in $\v{p}$ are different: it prohibits segments of $\v{p}$ to be permuted and fixes the sequence of points $(i,j)$ that the procedure in Definition \ref{Def4} utilizes to construct an associated transportation matrix. For every thermal process $T$ there is an associated matrix $T^{s}\coloneqq \frac{1}{Z^{2}}\text{diag}[1,q_{0,1},\dots,q_{0,d-1}] T \text{diag}[1,q_{1,0},\dots,q_{d-1,0}] $ that transforms slopes of the thermomajorization curve $\v{x}$ into slopes of $\v{y}$: $T\v{x}=\v{y} \iff T^{s}\partial\v{x}=\partial\v{y}$. $T^{s}$ has the same adjacency matrix as $T$, and therefore is associated with the same transportation matrix $P$ as $T$ is. Therefore, we see that the condition that $\v{r}$ is tightly thermomajorized by $\v{p}$ implies a map $T^{s}$ that is unique for all slopes of $\v{p}$ different, as only this map assures the curve $\v{r}$ has every of its elbows as high as possible (i.e. on $\v{p}$), given position of elbows to the left. Therefore, $T$ is also set uniquely. The property of the map of pushing the elbows as high as possible  is resembled by construction of the corresponding transportation matrix $P$ (Definition \ref{Def4}), where we assign the value $\min(r_{i},c_{j})$ to a given row or column -- this value is the biggest possible under constraints of $\v{r}$ and $\v{c}$.
\end{proof}

On the other hand, every biplanar extremal Thermal Process is associated with a pair of $\beta$-orders  of states, which are connected by a tight thermomajorization relation of their corresponding curves:

\begin{lem}\label{15}[every biplanar extremal Thermal Processes defines tight thermomajorization relation on states]
For an arbitrary biplanar extremal Thermal Process $T$ in the temperature $\beta$, characterized by orders $\pi_{in}(T)$ and $\pi_{out}(T)$, and a state $\rho_{init}$ such that $\pi(\beta(\rho_{init}))=\pi_{in}(T)$, we have that $\rho_{out}=T\rho_{init}\in Extr[\rho_{init}^{TP}]$ and $\pi(\beta(\rho_{out}))=\pi_{out}(T)$. If all slopes of $\beta(\rho_{init})$ are different, then $T$ is the unique transformation that maps $\rho_{in}$ into $\rho_{out}$. 
\end{lem}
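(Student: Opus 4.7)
The plan is to reverse the correspondence established in Lemma \ref{14}: I will show that applying a biplanar extremal $T$ to any state $\rho_{init}$ whose thermomajorization curve has $\beta$-order $\pi_{in}(T)$ produces a state whose curve is tightly thermomajorized by $\beta(\rho_{init})$ with $\beta$-order $\pi_{out}(T)$, and then invoke Theorem \ref{Extremal} for extremality.

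Concretely, I would begin by passing to the transportation matrix $P = T\,\mathrm{diag}[1,q_{10},\dots,q_{d-1,0}]$, which has the same adjacency graph as $T$. Because $T$ is biplanar extremal, reordering the columns of $P$ by $\pi_{in}(T)$ and the rows by $\pi_{out}(T)$ yields a plain bipartite graph; equivalently, scanning these orders from top to bottom, $P$ can be reconstructed by the iterative rule of Definition \ref{Def4}, which alternately fills $\min(r_i,c_j)$ and then advances to the next available row or column without ever jumping back. I would overlay this construction on the thermomajorization diagram of $\rho_{init}$: by the hypothesis $\pi(\beta(\rho_{init})) = \pi_{in}(T)$, the segments of $\beta(\rho_{init})$ appear left-to-right precisely in the order in which the procedure consumes its input columns. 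Consequently, each ``package'' of contributions carved out by the procedure before moving to the next output row $w$ corresponds to one consecutive horizontal interval of $\beta(\rho_{init})$ of total length $q_{w,0}/Z$, and produces one segment of $\beta(\rho_{out})$ whose right endpoint—an elbow of $\beta(\rho_{out})$—lies exactly on $\beta(\rho_{init})$. The Gibbs-preservation/height computation is identical to the one run in Lemma \ref{14}, only used in reverse.

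Next I would verify that $\pi_{out}(T)$ is a valid $\beta$-order for $\rho_{out}$, i.e.\ that $\beta(\rho_{out})$ read in this order is concave. This follows from the non-crossing nature of the plain bipartite graph: if output packages labelled $w < w'$ in the order $\pi_{out}(T)$ draw mass from input segments labelled $y$ and $y'$ in the order $\pi_{in}(T)$, then the biplanar property forces $y \le y'$, so the slopes of $\beta(\rho_{out})$—being convex combinations of the slopes of $\beta(\rho_{init})$ restricted to the respective packages—are non-increasing. Together with the fact that all elbows lie on $\beta(\rho_{init})$, this gives that $\beta(\rho_{out})$ is tightly thermomajorized by $\beta(\rho_{init})$, and Theorem \ref{Extremal} then yields $\rho_{out} \in Extr[\rho_{init}^{TP}]$.

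For the uniqueness clause under the assumption that all slopes of $\beta(\rho_{init})$ are different, I would appeal directly to Lemma \ref{14}: having just shown that $\beta(\rho_{out})$ is tightly thermomajorized by $\beta(\rho_{init})$, the uniqueness statement of that lemma forces any TP mapping $\rho_{init}$ to $\rho_{out}$ to coincide with $T$. The main obstacle, and the point that requires most care, is the geometric identification in the middle paragraph: one must check that the bookkeeping of Definition \ref{Def4}—in particular the alternation between ``continue on the same row/column'' and ``start a new tree''—is exactly the stacking procedure depicted in Fig.~\ref{Gibbs}, and that the interpretation remains consistent when $\pi_{in}(T)$ (and hence $\pi_{out}(T)$) fails to be uniquely determined because some slopes of $\beta(\rho_{init})$ happen to coincide. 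In that degenerate case any admissible choice of $\pi_{in}(T)$ still leads to the same $\rho_{out}$, while uniqueness of $T$ is not asserted, so no contradiction arises.
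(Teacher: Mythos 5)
Your proposal is correct and follows essentially the same route as the paper: it reverses the correspondence of Lemma \ref{14} via the plain bipartite graph of the associated transportation matrix, shows that the elbows of $\beta(\rho_{out})$ land on $\beta(\rho_{init})$ in the order $\pi_{out}(T)$, and defers uniqueness to Lemma \ref{14}. The paper's own proof is just a terser version of this argument (asserting extremality via the ``highest possible position of elbows'' rather than explicitly citing Theorem \ref{Extremal}), so your write-up mainly supplies details—the package/consecutive-interval identification and the concavity check—that the paper leaves implicit.
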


\begin{proof}
Every biplanar extremal Thermal Process $T$ is characterized by sequences of $\pi_{in}(T)$ and $\pi_{out}(T)$, that label left and right side of the associated graph of a transportation matrix $P$, and for which the graph is plain. Therefore, for every initial state $\rho_{init}$ with order $\pi(\beta(\rho_{init}))=\pi_{in}(T)$, we obtain a state $\rho_{out}=T\rho_{init}$ with order $\pi(\beta(\rho_{out}))=\pi_{out}(T)$. Also, $\rho_{out}$ is an extremal point of $\rho_{init}^{TP}$, because $\beta(\rho_{out})$ provides the highest possible position for elbows for a given order $\pi(\beta(\rho_{out}))$, and the latter follows from the proof of uniqueness from Lemma \ref{14}. The proof for uniqueness of the transformation $T$ is the same as in Lemma \ref{14}. 
\end{proof}

To summarize, the relation between biplanar extremal TPs and states with tightly thermomajorizable relation can put as follows:

\begin{thm}\label{Tmain}
For a state $\rho_{init}$ with $\beta(\rho_{init})$ with all slopes different and $\beta$-order $\pi(\beta(\rho_{init}))$, $\rho_{out}$ with $\beta$-order $\pi(\beta(\rho_{out}))$ and a biplanar extremal Thermal Process $T$ with orders $\pi_{in}(T)$ and $\pi_{out}(T)$, respectively,
$\rho_{out}=T\rho_{init}\in Extr[\rho_{init}^{TP}]$ if and only if $\pi(\beta(\rho_{init}))=\pi_{in}(T)$ and $\pi(\beta(\rho_{out}))=\pi_{out}(T)$. $T$ is the only TP that satisfies $T\rho_{init}=\rho_{out}$.
\end{thm}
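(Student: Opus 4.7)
The plan is to obtain Theorem \ref{Tmain} as a direct assembly of Lemmas \ref{14} and \ref{15}, using Theorem \ref{Extremal} as a bridge between the two sides of the equivalence. Almost no new work is required; the role of this theorem is to package the two one-sided correspondences (states with a tight thermomajorization relation $\leftrightarrow$ biplanar extremal TPs) into a single statement, once the hypothesis that all slopes of $\beta(\rho_{init})$ are distinct is imposed so that everything is canonically determined.

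For the \emph{only if} direction I would assume $\rho_{out}=T\rho_{init}\in Extr[\rho_{init}^{TP}]$. Theorem \ref{Extremal} then tells me that $\beta(\rho_{out})$ is tightly thermomajorized by $\beta(\rho_{init})$, so Lemma \ref{14} applies to this pair and yields a biplanar extremal thermal process $T'$ with $T'\rho_{init}=\rho_{out}$ and with orders $\pi_{in}(T')=\pi(\beta(\rho_{init}))$, $\pi_{out}(T')=\pi(\beta(\rho_{out}))$. Because the slopes of $\beta(\rho_{init})$ are all different, the uniqueness clause of Lemma \ref{14} forces $T=T'$, which delivers the two order identities. For the \emph{if} direction I would feed the hypotheses $\pi(\beta(\rho_{init}))=\pi_{in}(T)$ and $\pi(\beta(\rho_{out}))=\pi_{out}(T)$ into Lemma \ref{15}: that lemma gives $T\rho_{init}\in Extr[\rho_{init}^{TP}]$ with $\pi(\beta(T\rho_{init}))=\pi_{out}(T)=\pi(\beta(\rho_{out}))$, and combined with the equality $\rho_{out}=T\rho_{init}$ in the statement this is exactly what has to be shown.

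The final uniqueness claim, that $T$ is the only thermal process taking $\rho_{init}$ to $\rho_{out}$, is inherited verbatim from the uniqueness clauses of Lemmas \ref{14} and \ref{15}, both of which were obtained under the same all-distinct-slopes assumption via the auxiliary map $T^{s}$ that acts on slopes rather than occupations and shares its adjacency pattern with $T$. The only real subtlety in putting this proof on paper is to notice that the theorem is stated as a single biconditional about the triple $(\rho_{init},\rho_{out},T)$, so each direction uses a different lemma and neither direction is circular; I do not anticipate any genuine obstacle since all the hard content, namely Gibbs preservation of the constructed map, the forest/biplanar structure of the associated transportation matrix, and the rigidity coming from pushing elbows as high as possible, has already been handled inside Lemmas \ref{14} and \ref{15}.
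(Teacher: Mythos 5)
Your proof is correct and takes essentially the same route as the paper, whose entire proof consists of citing Lemma \ref{15} for the \emph{if} part and Lemma \ref{14} for the \emph{only if} part. Your write-up is in fact more explicit than the paper's: you correctly supply the intermediate step (Theorem \ref{Extremal} giving tight thermomajorization from extremality) and the uniqueness clause needed to identify $T$ with the process constructed in Lemma \ref{14}, both of which the paper leaves implicit.
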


\begin{proof}
\textit{If part} follows directly from Lemma \ref{15}. \textit{Only if part} follows from Lemma \ref{14}.
\end{proof}

From Thereom \ref{Tmain} is follows that, for a selected temperature $\beta$, one can calculate all biplanar extremal Thermal Processes from thermomajorization diagrams by investigating all possible $\beta$-orders of initial and outoput states, where initial states have curves with all their slopes different, and they tightly thermomajorize curves of output states. Note that any change of $\beta$ influences the relations between different $q_{mn}$, which in turn influences possible $\beta$-orders of curves associated with $\rho_{out}\in Extr[\rho_{init}^{TO}]$. In this way, temperature-dependent geometry of $\rho_{init}^{TP}$ reflects temperature-dependent geometry of the set of Thermal Processes.

\subsection{Non-biplanar Extremal Thermal Processes}

We see that it is enough to be able to perform an arbitrary biplanar extremal TP, as it allows one to achieve an arbitrary extremal point of $\rho_{init}^{TP}$ for every $\rho_{init}$. Therefore, while extremal TPs that do not belong to a class of biplanar extremal TPs cannot be calculated from thermomajorization diagrams, they also seam to lack an operational meaning: when we allow for convex combinations of TPs, every state in the set of $\rho_{init}^{TP}$ can be achieved solely by the use of biplanar extremal TPs. Moreover, extremal TPs that are not biplanar cannot even lead to extremal points of $Extr[\rho_{init}^{TP}]$ for the case of different slopes of $\beta(\rho_{init})$; they always lead to the interior of the set, for every $\rho_{init}$. It stems from the uniqueness of $T$ for states with curves that have all slopes different. Naturally, in a degenerated case with some slopes in $\beta(\rho_{init})$ the same ($\rho_{init}=\rho_{\beta}$ being an extreme case), many processes may lead to the same state, so non-biplanar and biplanar extremal TPs can effectively coincide for this subset of possible $\rho_{init}$. 

A question arises if non-bipartite extremal TPs exist for given $d$. In fact, all extremal TPs for $d=2,3$ are biplanar (their list can be found in \cite{Mazurek2017}). For $d=4$, while we have shown a transportation matrix that is non-biplanar (Fig. \ref{forest}d), this construction is not valid for $\v{r}=\v{c}$, as in this case, a link connecting vertices labeled by 1 on both sides implies that there should be no link connecting a vertex `1' from the left with vertex `2' from the right. Therefore, we cannot construct a corresponding Thermal Process, for which it is necessary that $\v{r}=\v{c}=[1,q_{10}\dots,q_{d-1,0}]$. 

Extremal points of TPs have a very simple form for zero temperature. There, they have a vector $[1,0,\dots,0]^{T}$ as the first column, and independent permutations of this vector in different columns, e.g.  $\left(\begin{smallmatrix} 
1 & 1 & 0 & 0 \\
0 & 0 &  1 & 0\\
0 & 0 &  0 & 1 \\
0 & 0 & 0 & 0
\end{smallmatrix}\right)$. Therefore, they are transportation matrices. It is visible that these matrices are associated with graphs that are forests (each column has exactly one `1', so no loops are possible), but isolated vertices may be present (as there are some rows filled with `0'). Moreover, each of the rows of the matrix corresponds to an independent tree in a forest, so the graphs are biplanar. However, already for $d=4$, when going from zero to small temperatures, while all graphs become connected, some of them also become immediately non-biplanar. Consider the extremal TP:

\be
T_{non-biplanar}=
\begin{pmatrix}
1-q_{10} & 1 & 0 & 0\\
q_{10}-q_{20}  & 0 & 1  & 0 \\
q_{20}-q_{30} & 0 &  0 &  1\\
q_{30} &0 &  0 & 0\\
\end{pmatrix},
\ee
This process has an associated transportation matrix

\be
P(T_{non-biplanar})=
\begin{pmatrix}
1-q_{10} & q_{10} & 0 & 0\\
q_{10}-q_{20}  & 0 & q_{20}  & 0 \\
q_{20}-q_{30} & 0 &  0 &  q_{30}\\
q_{30} &0 &  0 & 0\\
\end{pmatrix},
\ee
described by a graph composed from a forest with no isolated vertices, shown in Fig. \ref{forest}e. 
Note that the construction can be trivially extended for arbitrary $d>4$. Therefore, non-bipartite extremal TPs are present for an arbitrary non-zero temperature for $d\geq 4$, and absent for $d=2,3$.\\

\section{Conclusions}
The established link between all physically significant extremal Thermal Processes and thermomajorization curves gives a recipe for determining the  form of relevant extremal TPs for systems of higher dimension. The complexity of the algorithm is the same as for determining of all the corresponding extremal transportation matrices. The number of extremal points of transportation polytopes is not known in general.   
 
With complete characterization of the set $\rho_{init}^{TP(d)}$ established, a similar description of $\rho_{init}^{TP(n)}$ for $n<d$ should allow for the solution of the decomposability problem $(n=2)$ by determining length of sequences of two-level transformations needed to penetrate $\rho_{init}^{TP(2)}$ for an arbitrary initial state.\\

\textbf{Acknowledgments.} 
The author would like to thank M. Horodecki for inspiring discussions. 
This work was supported by National Science Centre, Poland, grant OPUS 9. 2015/17/B/ST2/01945. 
 
\bibliographystyle{ieeetr}
\bibliography{Extremal_biblio2}	
\end{document}